\theoremstyle{plain} 
\newtheorem{theorem}{Theorem}[section]
\newtheorem{lemma}[theorem]{Lemma}
\newtheorem{corollary}[theorem]{Corollary}
\newtheorem{proposition}[theorem]{Proposition}
\tikzstyle{vertex}  = [{circle,blue,draw,fill=black!50,inner sep=1pt}]
\tikzstyle{class} = [shape=rectangle, minimum height=5mm, rounded corners, draw, align=center, top color=white, bottom color=blue!20]
\newcommand{\hsc}[1]{{\footnotesize\sf\MakeUppercase{#1}}}
\newcommand{\stpath}[2]{$#1$-$#2$ path}
\newcommand{\stsep}[2]{$#1$-$#2$ separator}
\newcommand{\lp}[1]{\ensuremath{{\mathtt{lp}(#1)}}}
\newcommand{\rp}[1]{\ensuremath{{\mathtt{rp}(#1)}}}
\DeclareMathOperator*{\argmin}{arg\,min}
\DeclareMathOperator*{\argmax}{arg\,max}
\title{Graph Searches and Their End Vertices}
\author{
    Yixin Cao\thanks{Department of Computing, Hong Kong Polytechnic University, Hong Kong, China. \href{mailto:yixin.cao@polyu.edu.hk} {\tt yixin.cao@polyu.edu.hk}.} 
  \and
  Guozhen Rong\thanks{School of Computer Science and Engineering, Central South University, Changsha, China.}
  \and
  Jianxin Wang\footnotemark[2]
  \and
Zhifeng Wang\footnotemark[2]
}
\date{May 23, 2019}
\begin{document}

\maketitle
\begin{abstract}
  Graph search, the process of visiting vertices in a graph in a specific order, has demonstrated magical powers in many important algorithms.  But a systematic study was only initiated by Corneil et al.~a decade ago, and only by then we started to realize how little we understand it.  Even the apparently na\"{i}ve question ``which vertex can be the last visited by a graph search algorithm,'' known as the end vertex problem, turns out to be quite elusive.  We give a full picture of all maximum cardinality searches on chordal graphs, which implies a polynomial-time algorithm for the end vertex problem of maximum cardinality search.  It is complemented by a proof of NP-completeness of the same problem on weakly chordal graphs.
  We also show linear-time algorithms for deciding end vertices of breadth-first searches on interval graphs, and end vertices of lexicographic depth-first searches on chordal graphs.  Finally, we present $2^n\cdot n^{O(1)}$-time algorithms for deciding the end vertices of breadth-first searches, depth-first searches, maximum cardinality searches, and maximum neighborhood searches on general graphs.
\end{abstract}

\section{Introduction}
Breadth-first search (\textsc{bfs}) and depth-first search (\textsc{dfs}) are the most fundamental graph algorithms, and the standard opening of a course on this subject.  Their use can be found, sometimes implicitly, in most graph algorithms.
In general, a graph search is a systematic exploration of a graph, and its core lies on the strategy of how to choose the next vertex to visit.  Mostly greedy, graph searches are very simple but sometimes have magical powers.  \textsc{Dfs} has played a significant role in Tarjan's award-winning work, in testing planarity \cite{hopcroft-74-planarity-testing} and in finding strongly connected components \cite{tarjan-72-dfs}.

Two other search algorithms, lexicographic breadth-first search (\textsc{lbfs}) \cite{rose-76-vertex-elimination} and maximum cardinality search (\textsc{mcs}) \cite{tarjan-84-chordal-recognition}, were invented for the purpose of recognizing chordal graphs, i.e., graphs not containing any induced cycle on four or more vertices.
On a chordal graph, both \textsc{lbfs} and \textsc{mcs} produce perfect elimination orderings (see definition in the next section) of the graph, which exist if and only if the graph is chordal.
Albeit relatively less well known compared to \textsc{bfs} and \textsc{dfs}, \textsc{lbfs} and \textsc{mcs} did find important applications.  \textsc{Lbfs} is used in scheduling \cite{sethi-76-scheduling-graphs}, and is the base of the recent linear-time algorithm for computing modular decomposition of a graph \cite{tedder-08}.  Tarjan and Yannakakis~\cite{tarjan-84-chordal-recognition} also used \textsc{mcs} in testing acyclic hypergraphs.  Nagamochi and Ibaraki~\cite{nagamochi-92-edge-connectivity} rediscovered \textsc{mcs} and applied it to compute minimum cuts of a graph and find forest decompositions; see also \cite{nagamochi-10-algorithmic-connectivity}.

Simon~\cite{simon-91-interval} proposed an interesting way of using \textsc{lbfs}.  It conducts \textsc{lbfs} more than once, and each new run uses previous runs in breaking ties; in particular, except the first, each run starts from the last vertex of the previous run.  This generic approach turns out to be very useful, e.g., the extremely simple recognition algorithm for unit interval graphs \cite{corneil-04-recognize-uig}.  See the survey of Corneil~\cite{corneil-04-survey-lbfs} for more algorithms using multiple runs of \textsc{lbfs}.  Some of these results have a flavor of ``ad-hoc'': We do not fully understand the execution process of \textsc{lbfs}.

The outputs of \textsc{bfs} and \textsc{dfs} are usually rooted spanning forests of the graph, while \textsc{lbfs} and \textsc{mcs} produce orderings of its vertices.  To have a unified view of them, Corneil et al.~\cite{corneil-08-graph-searching} focused on the ordering of the vertices being first visited and conducted a systematic study of them.\footnote{One may note that we can define more than one vertex ordering for \textsc{dfs} and accordingly \textsc{ldfs}.  See \cite{corneil-08-graph-searching}.  On the other hand, the orderings produced by \textsc{lbfs} and \textsc{mcs} are conventionally the reverse of ours: Recall that their original purpose is to produce perfect elimination orderings of a chordal graph, which are reverse of our orderings.} This study motivates them to propose the lexicographic version of \textsc{dfs}, lexicographic depth-first search (\textsc{ldfs}), another very powerful graph search \cite{corneil-13-path-cover-on-cocomparability}, and a very general search paradigm, maximum neighborhood search (\textsc{mns}).  They showed that all the aforementioned graph searches can be characterized by variants of the so-called four-vertex condition.
These nice characterizations are however not sufficient to allow us to answer the ostentatiously na\"{i}ve question: Which vertex can be the last of such an ordering?  Corneil et al.~\cite{corneil-10-end-vertices-lbfs} defined the end vertex problem and studied it from both combinatorial and algorithmic perspectives.  Apart from a natural starting point of understanding the graph searches in general, end vertices of graph searches are of their own interest.  Behind the original use of \textsc{lbfs} and \textsc{mcs}, in the recognition of chordal graphs, is nothing but the properties of their end vertices, which are always simplicial on a chordal graph \cite{rose-76-vertex-elimination, tarjan-84-chordal-recognition, shier-84-all-perfect-elimination-orderings, berry-98-generalizes-dirac}.  Moreover, the success of multiple-run \textsc{lbfs} crucially hinges on the end vertices; e.g., an end vertex of a (unit) interval graph can always be assigned an extreme (i.e., leftmost or rightmost) interval \cite{corneil-04-recognize-uig, corneil-09-lbfs-strucuture-and-interval-recognition}.  Important properties and use of end vertices of other graph searches can be found in \cite{corneil-13-path-cover-on-cocomparability, habib-00-LBFS-and-partition-refinement, corneil-99-dominating-pairs-in-at-free}.

One may find it surprising, but the end vertex problem is NP-hard for all the six mentioned graph search algorithms \cite{corneil-08-graph-searching, charbit-14-tie-break-rule, beisegel-18-end-vertex}.  The study has thus been focused on chordal graphs and its closely related superclasses and subclasses.
After all, \textsc{lbfs} and \textsc{mcs} were invented for recognition of chordal graphs, and their properties on chordal graphs have been intensively studied.  (This renders the stagnation on chordal graphs a little more embarrassing.)  Moreover, most applications of \textsc{lbfs} and \textsc{ldfs} are on related graph classes.
The most natural superclass of chordal graphs is arguably the weakly chordal graphs, and two important subclasses are interval graphs and split graphs.  
It has been known that on weakly chordal graphs, the end vertex problems for all but \textsc{mcs} are NP-complete, while only \textsc{dfs} end vertex is NP-complete on chordal graphs \cite{corneil-08-graph-searching, charbit-14-tie-break-rule, beisegel-18-end-vertex}.  There are other polynomial-time algorithms for interval graphs and split graphs, most of which actually run in linear time. We complete the pictures for, in terms of graph searches, \textsc{mcs} and \textsc{ldfs}, and, in terms of graph classes, weakly chordal graphs and interval graphs.
A summary of known results is given in Figure~\ref{fig:overview}.

\begin{figure}[h]
  \centering\small
  \begin{tikzpicture}[every path/.style={thick}, scale = 1.2]
    \node[class] (weak) at (2, 8) {\hsc{weakly chordal}};
    \node[class] (chordal) at (2, 6.5) {\hsc{chordal}};
    \node[class] (split) at (4.2, 5) {\hsc{split}};
    \node[class] (interval) at (-.2, 5) {\hsc{interval}};
    \draw[violet] (split) -- (chordal);
    \draw[violet] (interval) -- (chordal) -- (weak); 
    \node[left = 0.25cm of interval] {all~\cite{corneil-10-end-vertices-lbfs, beisegel-18-end-vertex}}; 
    \node[left = 0.25cm of chordal] {\textsc{mns~\cite{beisegel-18-end-vertex}, mcs, ldfs}};
    \node[right = 0.25cm of chordal] {\textsc{dfs}};
    \node[left = 0.1cm of split] {all others~\cite{charbit-14-tie-break-rule, beisegel-18-end-vertex}};
    \node[right = 0.15cm of split] {\textsc{dfs}~\cite{charbit-14-tie-break-rule}};
    \node[right = 0.15cm of weak] {all~\cite{corneil-10-end-vertices-lbfs, charbit-14-tie-break-rule, beisegel-18-end-vertex}};

    \node (p) at (-3, 8) {P};
    \coordinate[right = 0.2cm of p] (x);
    \draw[-latex,violet] (x) --++ (0, -5mm);
    \node (np) at (7, 5) {NPC};
    \coordinate[left = 0.2cm of np] (y);
    \draw[-latex,violet] (y) --++ (0, 5mm);
  \end{tikzpicture}
  \caption{A summary of the known complexity of the end vertex problem for the six graph search algorithms.  For each graph class, the end vertex problem of graph searches listed to the left of it can be solved in polynomial time on this class, while those to the right are NP-hard.  The complexity of the  \textsc{bfs} end vertex and \textsc{lbfs} end vertex problems on chordal graphs are still open.}
  \label{fig:overview}
\end{figure}
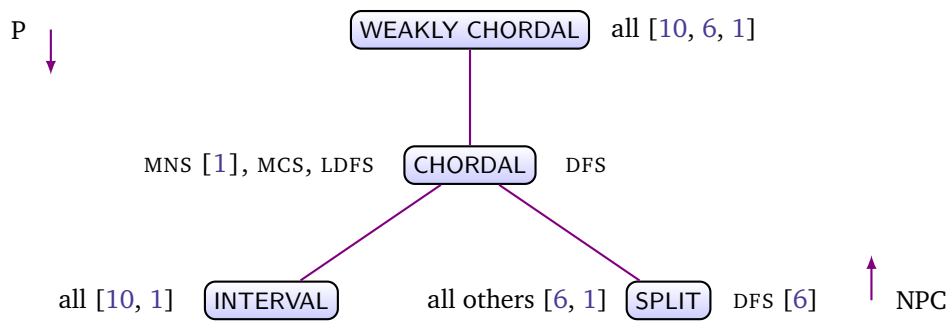

Blair and Peyton~\cite{blair-91-chordal-graphs-clique-trees} and Galinier et al.~\cite{galinier-95-chordal-graphs-clique-graphs} have shown that \textsc{mcs} of a chordal graph are closely related to its maximal cliques.  Let $G$ be a chordal graph.  An \textsc{mcs} visits all vertices in a maximal clique of $G$ before proceeding to another, and the next maximal clique is always chosen to have the largest intersection with a visited one.
Therefore, for a minimum separator $S$ of $G$, there is an \textsc{mcs} visiting the components of $G - S$ one by one, with $S$ visited together with the first component.  If we turn to any component $C$ of $G - S$, and consider its closed neighborhood, (which contains $C$ and $S$,) then we have a similar statement.  In other words, this property on minimum separators hold in a recursive way.  For an \textsc{mcs} end vertex $z$, which is necessarily simplicial, we can find a sequence of increasing separators such that the first is a minimum separator of $G$ and the last comprises all the non-simplicial vertices in $N(z)$.  An \textsc{mcs} ended with $z$ has to ``cross'' these separators in order, and for each of them, visit the component containing $z$ in the last.  We have thus a full understanding of all \textsc{mcs} orderings of a chordal graph.  As it turns out, this result is easier to be presented in the so-called weighted clique graph of $G$ \cite{blair-91-chordal-graphs-clique-trees, galinier-95-chordal-graphs-clique-graphs}.  It enables us to show that if we run \textsc{mcs} twice, first starting from $z$, and the second starting from the end vertex of the first run and using the first ordering to break ties, then the second run ends with $z$ if and only if $z$ is an \textsc{mcs} end vertex.  As usual, $n$ denotes the number of vertices in the input graph.

\begin{theorem}\label{thm:mcs-chordal}
  The \textsc{mcs} end vertex problem can be solved in $O(n^2)$ time on chordal graphs.
\end{theorem}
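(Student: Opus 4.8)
The plan is to obtain a complete structural description of the \textsc{mcs} orderings of a chordal graph $G$ and then to compress it into a procedure that runs only two \textsc{mcs} sweeps. Because every \textsc{mcs} end vertex is simplicial, I would first dispose of the non-simplicial case: if the queried vertex $z$ is not simplicial, the answer is immediately ``no''. For simplicial $z$ the natural arena is the weighted clique graph of $G$, whose nodes are the maximal cliques and where the weight of an edge between two cliques is the size of their intersection, a minimal separator of $G$. This model is essentially forced on us by the Blair--Peyton and Galinier et al.\ description: an \textsc{mcs} completes one maximal clique before moving to another, and it always advances to a clique maximizing the overlap with the already-visited vertices, that is, to a maximum-weight neighbor in the clique graph.

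The heart of the proof is a structural lemma pinning down when a run can be made to finish at $z$. Fix a minimum separator $S_1$ of $G$. There is an \textsc{mcs} that visits the components of $G - S_1$ as consecutive blocks and glues $S_1$ onto the first block; to terminate inside the component $C_1$ that contains $z$, that component must be the block visited last. Replacing $G$ by the closed neighborhood $N[C_1] = C_1 \cup S_1$ yields a strictly smaller chordal instance in which $S_1$ is already visited, and the argument recurses. Iterating, I would extract a chain of minimal separators $S_1, S_2, \dots, S_k$ of non-decreasing weight,
\[
  |S_1| \le |S_2| \le \cdots \le |S_k|,
\]
with $S_1$ a minimum separator of $G$ and $S_k$ equal to the set of non-simplicial neighbors of $z$. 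The target lemma states that a vertex ordering is an \textsc{mcs} ordering ending at $z$ if and only if it crosses $S_1, \dots, S_k$ in this order and, at every level, defers the side containing $z$ to the very end. This is the promised complete understanding of all \textsc{mcs} orderings.

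With the lemma in hand, I would convert it into the two-sweep test. Run \textsc{mcs} once from $z$ to produce an ordering $\sigma$ whose last vertex is some $w$; then run a second \textsc{mcs} starting at $w$ and using $\sigma$ to break every tie (an \textsc{mcs}+), producing an ordering $\tau$. The claim to establish is that $z$ is an \textsc{mcs} end vertex if and only if $\tau$ ends at $z$. One direction is immediate: if $\tau$ ends at $z$ then $\tau$ itself witnesses that $z$ is an end vertex. For the other direction I would use the structural lemma to show that, when $z$ is an end vertex, the first sweep from $z$ arranges $\sigma$ so that the tie-breaking of the second sweep is steered, level by level along the chain $S_1, \dots, S_k$, to keep $z$'s side last, and hence $\tau$ terminates at $z$. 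Since a single \textsc{mcs} (or \textsc{mcs}+) sweep runs in $O(n + m)$ time and $m = O(n^2)$ on a chordal graph, two sweeps cost $O(n^2)$, which is the stated bound for the given target $z$.

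The step I expect to be the main obstacle is the completeness direction, namely proving that breaking ties by $\sigma$ really forces the second sweep down to $z$. The subtle points are that a single minimal separator may label several edges of the weighted clique graph, so ``crossing $S_i$'' must be phrased at the level of blocks rather than individual cliques, and that the recursion must be shown to bottom out precisely at the non-simplicial part of $N(z)$. Once the forced-crossing lemma is secured, the equivalence between the two-sweep test and end-vertex membership, together with the $O(n^2)$ running time, follows with only routine bookkeeping.
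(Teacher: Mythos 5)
Your proposal follows essentially the same route as the paper: the weighted clique graph with \textsc{mcs} orderings characterized as Prim orderings, a chain of separators $S_1,\ldots,S_k$ ending at the non-simplicial part of $N(z)$ characterizing end vertices, and a two-sweep \textsc{mcs}$^+$ test (start at $z$, then restart from the end vertex breaking ties by the first ordering) whose correctness rests on showing the tie-breaking forces the second sweep to defer $z$'s side of each $S_i$. One minor caveat: your claim that an \textsc{mcs}$^+$ sweep runs in $O(n+m)$ is not justified (the paper explicitly notes this is unclear), but the $O(n^2)$ bound still follows by charging $O(n)$ per tie-break over $n$ steps.
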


We complement this result by showing that the \textsc{mcs} end vertex problem becomes NP-complete on weakly chordal graphs;  the proof is inspired by and adapted from Beisegel et al.~\cite{beisegel-18-end-vertex}.

\begin{theorem}\label{thm:mcs-weakly-chordal}
  The \textsc{mcs} end vertex problem is NP-complete on weakly chordal graphs.
\end{theorem}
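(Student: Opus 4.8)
The plan is to establish membership in NP first and then give a polynomial reduction from a suitable NP-complete problem, most naturally $3$-\textsc{sat} (or the variant of it used by Beisegel et al.). Membership is the easy part: a certificate is an \textsc{mcs} ordering $\sigma$ ending at $z$, and one verifies it in polynomial time by scanning $\sigma$ from left to right and checking, at each position $i$, that the number of neighbors of $\sigma(i)$ among $\{\sigma(1),\dots,\sigma(i-1)\}$ is at least as large as that of every still-unvisited vertex. This is exactly the greedy cardinality rule defining \textsc{mcs}, so it confirms that no vertex could have been preferred to $\sigma(i)$ at that step.

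For hardness, I would build from a formula $\varphi$ with variables $x_1,\dots,x_m$ and clauses $c_1,\dots,c_k$ a weakly chordal graph $G$ together with a designated vertex $z$, so that $z$ is an \textsc{mcs} end vertex of $G$ if and only if $\varphi$ is satisfiable. The skeleton, following Beisegel et al., would consist of: a \emph{start region} whose structure, together with the cardinality counts, forces every \textsc{mcs} that can possibly terminate at $z$ to begin there; a \emph{variable gadget} for each $x_i$ offering the search an irrevocable binary choice (which literal of $x_i$ to expose first) that encodes setting $x_i$ true or false; a \emph{clause gadget} for each $c_j$ that becomes fully visitable before $z$ precisely when at least one of its literals has been set true; and the target $z$, attached so that it can be left for last only after all clause gadgets have been cleared. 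The intended execution threads the search through the variable gadgets, then the clauses, and finally $z$.

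The correctness argument would then split into the two standard directions. If $\varphi$ has a satisfying assignment, I would exhibit the corresponding tie-breaking, starting in the start region and, at each variable gadget, breaking the cardinality tie toward the chosen literal, and argue that every clause gadget is emptied before $z$, leaving $z$ uniquely last. Conversely, from any \textsc{mcs} ending at $z$ I would read off an assignment from the choices made in the variable gadgets and argue that an unsatisfied clause would compel the search to reach $z$ before some other vertex, contradicting that $z$ is last; hence the extracted assignment satisfies $\varphi$.

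The delicate point, and the real work of the adaptation, is twofold. First, \textsc{mcs} selects by the \emph{cardinality} of the visited neighborhood rather than by the inclusion-maximality used by \textsc{mns} and \textsc{lbfs}, so I expect to need to pad the gadgets with pendant-like or twin vertices and to balance degrees carefully, so that the numerical counts, at exactly the moments choices are made, reproduce the intended maximal-neighborhood behavior of the Beisegel et al.\ construction. Second, and this is the main obstacle, the whole graph must remain weakly chordal: I would have to verify that neither $G$ nor its complement contains an induced $C_\ell$ for $\ell\ge 5$, which constrains how densely the gadgets and $z$ may be wired and is precisely where the padding threatens to create long induced holes or antiholes. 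Checking weak chordality of each gadget and of the interfaces between gadgets, while keeping the cardinality counts correct, is where I would expect to spend most of the effort.
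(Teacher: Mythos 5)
Your NP-membership argument is fine, and your overall plan---a polynomial reduction from 3-\textsc{sat} adapted from Beisegel et al.---is exactly the route the paper takes. But the proposal stops at the plan: no graph is actually constructed, and you yourself flag the two places where the substance lies (making the cardinality counts force the intended behaviour, and verifying weak chordality) as work you ``would expect to spend most of the effort'' on. Those two items \emph{are} the proof; without them there is nothing to check, so this is a genuine gap rather than a complete argument by a different route.

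For comparison, here is what the paper actually builds, and note that its mechanism differs from the one you sketch. There is no separate ``variable gadget'' with an irrevocable branching structure: the variable choice is encoded simply by which of the two non-adjacent literal vertices $x_i,\overline{x_i}$ gets visited during an initial phase, inside a set $L$ of $2p$ literal vertices that is complete except for the $p$ non-edges $x_i\overline{x_i}$. The crucial trick is that a clause vertex is joined to exactly the literal vertices \emph{not} occurring in that clause (so each clause vertex has $2p-3$ neighbours in $L$), and a handful of auxiliary vertices $a_1,a_2,u_1,u_2,b,y,z$ pin down the order: one shows $b<_\sigma c$ for every clause vertex $c$, and then an \emph{unsatisfied} clause would make its clause vertex adjacent to all $p$ chosen literals plus $a_2$, giving it $p+1$ visited neighbours and forcing it to be taken \emph{before} $b$---a contradiction. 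So the clause gadget works by having too many visited neighbours when the clause fails, the opposite polarity from your ``becomes fully visitable when satisfied'' description. Weak chordality is then verified by a separate reduction argument peeling off the auxiliary vertices (via degree, simpliciality, and twin observations) until only $G[L\cup C]$ and its complement remain to inspect. None of this is routine enough to be left implicit, which is why the proposal as written does not yet constitute a proof.
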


We then turn to \textsc{ldfs} on chordal graphs.  Surprisingly, the characterization of Berry et al.~\cite{berry-10-extremities-search} for end vertices of \textsc{mns} on chordal graphs is also true for \textsc{ldfs}:  A simplicial vertex $z$ of a chordal graph $G$ is an \textsc{ldfs} end vertex if and only if the minimal separators of $G$ in $N(z)$ are totally ordered by inclusion.  We also show a simple algorithm for solving the \textsc{bfs} end vertex problem on interval graphs.  

\begin{theorem}
  There are linear-time algorithms for solving the \textsc{ldfs} end vertex problem on chordal graphs and for solving the \textsc{bfs} end vertex problem on interval graphs.
\end{theorem}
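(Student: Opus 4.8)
The two halves are independent, so I would treat them separately.

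For the \textsc{ldfs} half I would take the quoted characterization as given: it suffices to test, for the candidate vertex $z$, that (i) $z$ is simplicial and (ii) the minimal separators of $G$ contained in $N(z)$ form a chain under inclusion. The plan is to reduce both tests to a single clique tree $T$, which can be built in $O(n+m)$ time from a perfect elimination ordering produced by \textsc{mcs}. Since a vertex of a chordal graph is simplicial exactly when it lies in a unique maximal clique, test (i) is read off $T$ by checking that the cliques containing $z$ form a single node $K=N[z]$. For test (ii) I would use the standard fact that the minimal separators of $G$ are precisely the edge labels $K_i\cap K_j$ of $T$; hence every minimal separator contained in $N(z)$ appears as the label of some edge of $T$. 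I would therefore collect all edges whose label is a subset of $N(z)=K\setminus\{z\}$, bucket-sort these label sets by cardinality, and verify the chain property by testing containment of each set in the next one via a marking array that is reset after each comparison (two distinct sets of equal size correctly fail the containment test, so incomparability is detected). The quantitative observation that makes this linear is that the total size of all edge labels of a clique tree is $O(n+m)$: each vertex $v$ contributes only to the labels of the edges lying inside its clique subtree $T_v$, so the total is $\sum_v(|T_v|-1)=\sum_K|K|-n=O(n+m)$, using that the maximal cliques of a chordal graph have total size $O(n+m)$.

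For the \textsc{bfs} half on interval graphs I would first compute, in linear time, an interval representation, equivalently a clique path $C_1,\dots,C_t$ in which the cliques containing any fixed vertex are consecutive. The combinatorial core is a characterization of when a vertex $z$ can be the last vertex of some \textsc{bfs}. Two conditions are clearly necessary: there must be a start $s$ with $d(s,z)=\mathrm{ecc}(s)$, so that $z$ lies in the last layer, and $z$ must be \emph{postponable}, i.e.\ orderable last within that layer. The first condition forces $z$ to be an extreme interval, which lets me restrict candidate end vertices to the two end cliques $C_1$ and $C_t$, and to restrict the relevant starts $s$ to a bounded set of extreme vertices (the endpoints of a diametral/dominating pair, obtained by a constant number of \textsc{lbfs} sweeps). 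The plan is then to run \textsc{bfs} from each such canonical start and, for each candidate $z$ in the opposite end clique, to decide postponability by a direct, monotone condition on the interval endpoints rather than by trying all tie-breaks.

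The genuinely delicate point, and the step I expect to be the main obstacle, is the postponability condition. Unlike in the general case, where exactly this freedom of ordering within a layer makes the problem NP-hard, here I must prove that on interval graphs being last in the last layer is governed by a simple geometric condition: essentially that $z$'s interval can be pushed to the right end of its layer without any other vertex of the layer being forced after it. Establishing this requires analyzing how the FIFO order of one layer propagates to the next under the consecutiveness of the clique path, and in particular showing both that the condition is necessary (a vertex violating it is always overtaken) and that it is sufficient (a valid tie-breaking sequence realizing $z$ last can be constructed explicitly). Once this local characterization is in place, together with the reduction to $O(1)$ starting vertices, the algorithm consists of a constant number of linear-time \textsc{bfs}/\textsc{lbfs} sweeps plus one linear scan of the representation, giving the claimed linear running time.
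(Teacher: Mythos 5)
Both halves of your proposal have genuine gaps. For the \textsc{ldfs} half, the linear-time test you describe (build a clique tree, collect the edge labels contained in $N(z)$, bucket-sort by cardinality, check the chain property with a marking array) is a reasonable and correctly analyzed implementation --- indeed more explicit than what the paper writes down --- but you take the characterization itself as given, and that is the actual mathematical content of this half. What is known from Berry et al.\ is the characterization of \textsc{mns} end vertices; since \textsc{ldfs} is a special case of \textsc{mns}, only the \emph{necessity} of the condition transfers for free. The sufficiency --- that every simplicial $z$ whose minimal separators in $N(z)$ form a chain $S_1\subset\cdots\subset S_k$ is attainable as the last vertex of the more restrictive \textsc{ldfs} --- has to be proved, and the paper does so by explicitly constructing an ordering that visits $S_1$, then the components attached to $S_1$, then $S_2$, and so on, and verifying the lexicographic tie-breaking rule at each decision point. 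Without that construction your argument only certifies \textsc{mns} end vertices.

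For the \textsc{bfs} half the proposed reductions are not merely unproved but false. The claim that $\mathrm{dist}(s,z)=\mathrm{ecc}(s)$ forces $z$ to be an extreme interval, and hence that candidate end vertices can be restricted to the two end cliques, is contradicted by Figure~\ref{fig:bfs-example}: there $z$ lies in neither end clique and is at distance two from both extreme simplicial vertices, yet $s,v,u,x,w,y,z$ is a valid \textsc{bfs} ordering ending at $z$. Likewise, restricting the start vertex to a bounded set of extreme (diametral) vertices fails: in Figure~\ref{fig:bfs-bad-example} the only starts from which $z$ can be last are interior intervals such as $s$, while a \textsc{bfs} from the extreme vertex $w$ must end at $u$. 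The paper instead characterizes admissible starts as the vertices $x$ with $\mathrm{dist}(x,z)=\mathrm{dist}(x,w)\ge\mathrm{dist}(x,u)$ (for a suitable orientation of the clique path) and picks among them one whose interval is leftmost; proving that this single canonical choice suffices, and then analyzing the tie-breaking within the last two \textsc{bfs} layers, is the bulk of the proof --- exactly the ``postponability'' step that you identify as the main obstacle and leave open. As it stands, the \textsc{bfs} half is a plan built on two incorrect reductions rather than a proof.
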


We have to, nevertheless, leave open the \textsc{bfs} and \textsc{lbfs} end vertex problems on chordal graphs.  Since both can be solved in linear time on split graphs, we conjecture that they can be solved in polynomial time on chordal graphs.  It is extremely rare that a problem is hard on chordal graphs but easy on split graphs.

We also consider algorithms for solving the end vertex problems on general graphs.  By enumerating all possible orderings, a trivial algorithm can find all end vertices of any graph search in $n!\cdot n^{O(1)}$ time.  On the other hand, with the only exception of \textsc{bfs}, the reductions used in proving NP-hardness of the end vertex problems are linear reductions from (3-)\textsc{sat}.
As a result, these problems cannot be solved in subexponential time, unless the exponential time hypothesis fails \cite{impagliazzo-01-eth}.  A natural question is thus which of them can be solved in $2^{O(n)}$ time.  If we put them under closer scrutiny, we will see that these graph searches are somewhat different: When selecting the next vertex, \textsc{mcs} only needs to know which vertices have been visited, while the order of visiting them is immaterial.  In contrast, the other graph searches are not \emph{oblivious} and need to keep track of the whole visiting history.  Therefore, it is quite straightforward to use dynamic programming to solve the \textsc{mcs} end vertex problem in $2^{n}\cdot n^2$ time.  We also manage to show that a similar approach actually works for  the \textsc{bfs} and \textsc{dfs} end vertex problems. 

\begin{theorem}\label{thm:exact-alg}
  There are $2^n\cdot n^{O(1)}$-time algorithms that solve the end vertex problems of the following graph searches: \textsc{mcs}, \textsc{bfs}, and \textsc{dfs}.
\end{theorem}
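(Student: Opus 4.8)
The plan is to solve all three end vertex problems with a single dynamic programming scheme over subsets of $V$, exploiting the fact that each of these searches chooses its next vertex by a local rule. For an \emph{oblivious} search such as \textsc{mcs}, where the eligible next vertices depend only on the \emph{set} of already-visited vertices, the scheme is immediate. I would process subsets $S\subseteq V$ in increasing order of cardinality and maintain a Boolean $R(S)$ that is true exactly when $S$ is the vertex set of some prefix of an \textsc{mcs} ordering. Every singleton is reachable, since the first vertex of an \textsc{mcs} is arbitrary, and $R(S)$ holds iff there is $v\in S$ with $R(S\setminus\{v\})$ such that $v$ maximizes $|N(u)\cap(S\setminus\{v\})|$ over all $u\notin S\setminus\{v\}$. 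This last test depends only on the set $S\setminus\{v\}$, which is precisely what makes \textsc{mcs} easy. A vertex $z$ is then an \textsc{mcs} end vertex iff $R(V\setminus\{z\})$ holds, because $z$, being the only remaining vertex, is trivially an eligible final choice. There are $2^n$ subsets and each is handled in $n^{O(1)}$ time, giving the claimed bound.

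For \textsc{bfs} and \textsc{dfs} the next vertex is not determined by the visited set alone: \textsc{bfs} must extend from the oldest visited vertex that still has an unvisited neighbour (the head of the queue), while \textsc{dfs} extends from the most recently visited such vertex (the top of the stack). I would therefore enrich the state to a pair $(S,a)$, where $a$ is this \emph{active} vertex, giving $2^n\cdot n$ states, with base cases $(\{r\},r)$ for every possible root $r$. From a state $(S,a)$ one appends an unvisited neighbour $v\in N(a)\setminus S$ to obtain $S'=S\cup\{v\}$; the subtlety is the new active vertex $a'$. If $v$ itself still has an unvisited neighbour, then $a'=v$ for \textsc{dfs}, while the head is unchanged for \textsc{bfs}. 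The real difficulty is the \emph{advancing} step, in which the current active vertex runs out of unvisited neighbours and the search must move to the next queued vertex (\textsc{bfs}) or pop one or more vertices off the stack (\textsc{dfs}). Finally, $z$ is an end vertex iff some reachable state exhausts $V$ with $z$ appended last.

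The hard part is exactly this advancing step, which a priori depends on the entire visiting history rather than on $(S,a)$: for \textsc{bfs} the next head is the next vertex in queue order, and for \textsc{dfs} the new top is the deepest remaining ancestor with an unvisited neighbour, and neither is recoverable from the set together with a single active vertex. The crux of the proof will be a structural lemma showing that this history can be dispensed with. For \textsc{bfs} I would use the layer structure: a reachable prefix consists of all complete layers below the current one together with a partial top layer, so the active vertex always lies in one layer and the admissible new heads can be characterised from $S$. For \textsc{dfs} I would argue that the relevant ancestors retaining unvisited neighbours can be recovered from $S$ and $a$ via the four-vertex characterisation of \textsc{dfs} orderings, so that the new top may be \emph{guessed} from the frontier of $S$ and the guess verified locally; establishing the correctness of this replacement is where the most care is needed. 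Once the transition is shown to be computable in $n^{O(1)}$ time per state, both algorithms run in $2^n\cdot n^{O(1)}$ time, completing the theorem.
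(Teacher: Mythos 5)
Your \textsc{mcs} argument is exactly the paper's: a Boolean table over subsets, exploiting that the eligibility test depends only on the visited set, with $z$ an end vertex iff $R(V\setminus\{z\})$ holds. The gap is in \textsc{bfs} and \textsc{dfs}, and it is precisely the step you defer to an unproven ``structural lemma.'' The state $(S,a)$ does not carry enough information, and the sketch you give for recovering the missing history does not go through. For \textsc{bfs}, the future is governed by the queue order of the visited-but-unprocessed vertices: each vertex of $L_{i+1}$ is enqueued when its earliest-visited neighbour in $L_i$ is dequeued, so the realizable orders of $L_{i+1}$ depend on the order in which $L_i$ was visited, and this dependence propagates through every layer. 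Knowing only the set $S$ (which, past a completed layer, no longer records that order) and the single head $a$ does not determine the admissible next head, so ``the admissible new heads can be characterised from $S$'' is exactly the claim that needs proof, and a relaxed transition that guesses any plausible head is not obviously sound. The \textsc{dfs} situation is the same: the stack of ancestors with pending neighbours is genuine history not recoverable from $(S,a)$, and the four-vertex characterisation does not localize it, since that condition quantifies over the ordering rather than the visited set.

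For comparison, the paper closes this gap by changing the state, not by proving that $(S,a)$ suffices. For \textsc{bfs} it fixes the start vertex $s$, takes the distance layers $L_1,\ldots,L_\ell$, and shows that the only information about the traversal of $L_i$ that the future needs is a pair $(X_i,u_i)$: here $u_i$ is the first $L_i$-vertex adjacent to the tail $Y_{i+1}=L_{i+1}\setminus X_{i+1}$ of the next layer and $X_i$ is the set of $L_i$-vertices visited before $u_i$; the recurrence requires $X_i\cap N(Y_{i+1})=\emptyset$ and $X_{i+1}\cup\{u_{i+1}\}\setminus N(X_i)\subseteq N(u_i)$, the order within $L_i\setminus(X_i\cup\{u_i\})$ being provably irrelevant, and the answer is $f(L_\ell\setminus\{z\},z)$. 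Because the layers partition $V(G)$, the state space stays within $2^n\cdot n^{O(1)}$. For \textsc{dfs} the paper abandons prefix states altogether and uses a divide-and-conquer table $f(X,s,t)$ (``$G[X]$ is connected and admits a \textsc{dfs} ordering from $s$ ending at $t$''), splitting on the first-visited neighbour $v$ of $t$ and on the suffix $Y\ni v$ that must contain $(N[t]\cap X)\setminus\{s\}$; correctness rests on the fact that the suffix of a \textsc{dfs} ordering induces a connected subgraph whose internal order can be replaced by any \textsc{dfs} ordering of that subgraph started at its first vertex. Some such reformulation is needed in place of your guess-and-verify transition before the $2^n\cdot n^{O(1)}$ bound can be claimed.
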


\section{Preliminaries}

All graphs discussed in this paper are undirected and simple.  The vertex set and edge set of a graph $G$ are denoted by, respectively, $V(G)$ and $E(G)$, and we use $n = |V(G)|$ and $m = |E(G)|$ to denote their {cardinalities}.
For a subset $X\subseteq V(G)$, denote by $G[X]$ the subgraph of $G$ induced by $X$, and by $G - X$ the subgraph $G[V(G)\setminus X]$. 
The \emph{(open) neighborhood} of a vertex $v \in V(G)$, denoted by $N(v)$, comprises vertices adjacent to $v$, i.e., $N(v) = \{ u \mid uv \in E(G) \}$, and the \emph{closed neighborhood} of $v$ is $N[v] = N(v) \cup \{ v \}$.  The \emph{degree} of a vertex $v$ is the number of neighbors it has, i.e., $d(v)=|N(v)|$.
A vertex $v$ is \emph{simplicial} if $N[v]$ induces a complete graph.
Two distinct vertices $u$ and $v$ are \emph{true twins} if $N[u] = N[v]$, and \emph{false twins} if $N(u) = N(v)$; note that true twins are adjacent while false twins are not.

A set $S$ of vertices is a \emph{\stsep{u}{v}} if $u$ and $v$ are not in $S$ and they are not connected in $G-S$, and a \emph{\stsep{u}{v}} is \emph{minimal} if no proper subset of $S$ is a \emph{\stsep{u}{v}}.  We say that $S$ is a (minimal) separator if it is a (minimal) \stsep{u}{v} for some pair of $u$ and $v$, and it is a \emph{minimum separator} of $G$ if it has the smallest cardinality among all separators of $G$.

An \emph{ordering} $\sigma$ of the vertices of $G$ is a bijection from $V(G)\to \{1, \ldots, n\}$.  For two vertices $u$ and $v$, we use $u<_\sigma v$ to denote $\sigma(u) < \sigma(v)$. 
The \emph{end vertex} of $\sigma$ is the vertex $z$ with $\sigma(z) = n$.  Given a graph $G$ and a vertex $z\in V(G)$, the \emph{end vertex problem} for graph search $S$ is to determine whether there is an $S$-ordering of $G$ of which $z$ is the end vertex.

A graph is \emph{chordal} if it contains no induced cycle on four or more vertices.  A graph is chordal if and only if it can be made empty by removing simplicial vertices from the remaining graph one by one; the order of the vertices removed is called a \emph{perfect elimination ordering} \cite{fulkerson-65-interval-graphs}.
The greedy strategy of \textsc{mcs} is to choose an unvisited vertex with the maximum number of visited neighbors.
On a chordal graph $G$, the last vertex of any \textsc{mcs} is simplicial, and thus the reversal of an \textsc{mcs} ordering is always a perfect elimination ordering \cite{tarjan-84-chordal-recognition}.

To avoid unnecessary digressions, we consider only connected graphs.  All the results can be easily generalized to general graphs.

\section{Maximum cardinality search on chordal graphs}
Another important characterization of chordal graphs is through its maximal cliques.  A graph $G$ is chordal if and only if we can arrange its maximal cliques as a tree such that for each vertex $v\in V(G)$, maximal cliques containing $v$ induce a subtree; such a  tree is called a \emph{clique tree} of $G$ \cite{dirac-61-chordal-graphs}.  A chordal graph $G$ has at most $n$ maximal cliques \cite{dirac-61-chordal-graphs}, and for any pair of adjacent $K_i$ and $K_j$ on the clique tree, the intersection $K_i \cap K_j$ is a minimal separator of $G$.

Out of a chordal graph $G$, we can define a \emph{weighted clique graph} $C(G)$ as follows.  It has $\ell$ vertices, where $\ell$ is the number of maximal cliques of $G$, and each vertex is labeled by a distinct maximal clique of $G$.
 To simplify the presentation, we will refer to vertices of $C(G)$ as cliques; note that we are not going to use cliques of the graph $C(G)$ in this paper.
There is an edge between maximal cliques $K_i$ and $K_j$, $1\le i, j \le \ell$, if and only if $K_i \cap K_j$ is a minimal \stsep{x}{y} for all $x\in K_i \setminus K_j$ and $y\in K_j \setminus K_i$.   We label this edge with $K_i \cap K_j$, and set its weight to be $|K_i \cap K_j|$.
It is known that a tree on the maximal cliques of $G$ is a clique tree of $G$ if and only if it is a maximum spanning tree of $C(G)$ \cite{bernstein-81-natural-semijoins, blair-91-chordal-graphs-clique-trees, galinier-95-chordal-graphs-clique-graphs}, i.e., a spanning tree of $C(G)$ with the maximum total edge weights.

\begin{proposition}\label{lem:minimal-separators}
  Let $G$ be a chordal graph and $C(G)$ the weighted clique graph of $G$.  A vertex set $S\subseteq V(G)$ is a minimal separator of $G$ if and only if it is the label for some edge of $C(G)$.
\end{proposition}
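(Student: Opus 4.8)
The plan is to prove the two implications separately, with the forward direction carrying all the weight.

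The ``if'' direction I would dispatch by simply unfolding the definition of $C(G)$. Suppose $S$ is the label of an edge $K_iK_j$ of $C(G)$. Then $S = K_i\cap K_j$ and, by the very definition of the edges of $C(G)$, the set $S$ is a minimal \stsep{x}{y} for every $x\in K_i\setminus K_j$ and $y\in K_j\setminus K_i$. Since $K_i$ and $K_j$ are distinct maximal cliques, neither contains the other, so both $K_i\setminus K_j$ and $K_j\setminus K_i$ are nonempty; picking any such pair $x,y$ exhibits $S$ as a minimal separator of $G$.

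For the ``only if'' direction I would first reduce to clique trees. Fix a clique tree $T$ of $G$. By the stated characterization, $T$ is a maximum spanning tree of $C(G)$, so every edge of $T$ is an edge of $C(G)$, and each such edge is labelled by the intersection of its two endpoints, which is a minimal separator. Hence it suffices to prove that an arbitrary minimal separator $S$ equals $K_i\cap K_j$ for some edge $K_iK_j$ of $T$. Write $S$ as a minimal \stsep{a}{b}, and let $C_a,C_b$ be the components of $G-S$ containing $a,b$; minimality makes these components \emph{full}, i.e.\ $N(C_a)=N(C_b)=S$. I would also invoke the classical fact that a minimal separator of a chordal graph induces a clique~\cite{dirac-61-chordal-graphs}, so that the maximal cliques containing $S$ make sense; moreover, since every pair of vertices of the clique $S$ lies in a common maximal clique, the subtrees $T_v$ $(v\in S)$ pairwise intersect, and by the Helly property of subtrees of a tree they share a common clique. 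Thus the set $T_S$ of maximal cliques containing $S$ induces a subtree of $T$.

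The crux, which I expect to be the main obstacle, is a structural claim: each full component contributes a maximal clique capturing $S$. Concretely, there is a vertex of $C_a$ adjacent to all of $S$, whence a maximal clique $K_a$ with $S\subsetneq K_a$ and $K_a\setminus S\subseteq C_a$, and symmetrically a clique $K_b$ with $K_b\setminus S\subseteq C_b$. I would establish the existence of such a vertex by a standard chordality argument: take $x\in C_a$ maximizing $|N(x)\cap S|$, and if some $s\in S$ is missed, combine a shortest $x$--$s$ path through $C_a$ with an edge of the clique $S$ to produce an induced cycle on four or more vertices, contradicting chordality (inspecting a simplicial vertex of $G[S\cup C_a]$ gives the same conclusion).

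Granting this, the endgame is clean. Both $K_a,K_b$ lie in $T_S$, and $T_S$ is a subtree, so the $T$-path between them stays inside $T_S$; every clique $K$ on it contains $S$, and $K_a\ne K_b$ since their private parts lie in different components. For each such $K$, the set $K\setminus S$ is a clique of $G-S$ and hence lies in a single component of $G-S$; colour $K$ by that component, so the two endpoints receive colours $C_a$ and $C_b$. Let $K_i$ be the last clique on the path with colour $C_a$ and $K_j$ its successor, whose colour is therefore different. Then $S\subseteq K_i\cap K_j$ since both contain $S$, while $K_i\cap K_j\subseteq S$ because any $w\in(K_i\cap K_j)\setminus S$ would lie in $K_i\setminus S\subseteq C_a$ and simultaneously in $K_j\setminus S$, which is contained in a component other than $C_a$ (and is empty if $K_j=S$)—impossible. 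Hence $K_i\cap K_j=S$, and since $K_iK_j$ is an edge of $T$, thus of $C(G)$, labelled by $K_i\cap K_j$, the separator $S$ labels an edge of $C(G)$, as required.
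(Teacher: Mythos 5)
Your proof is correct, and the ``if'' direction matches the paper's, but for the ``only if'' direction you take a genuinely different route: the paper disposes of it in one line by citing Blair and Peyton's Theorem~4.3 (every clique tree has two adjacent cliques whose intersection is the given minimal separator), whereas you reprove that fact from scratch. Your argument --- full components of a minimal $a$-$b$ separator $S$, the Helly property giving the subtree $T_S$ of cliques containing $S$, one maximal clique per full component whose private part lies in that component, and a colouring of the $K_a$--$K_b$ path in $T_S$ by components of $G-S$ to locate a tree edge with intersection exactly $S$ --- is a clean, self-contained derivation of the cited theorem; it buys independence from the reference at the cost of length, and along the way it makes explicit a fact (each full component contains a vertex adjacent to all of $S$) that the paper itself uses without proof in later arguments. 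The one step I would tighten is precisely that lemma: the cycle you form from a shortest $x$--$s$ path through $C_a$ and an edge $st$ of the clique $S$ is not automatically induced, because $t$ may be adjacent to internal vertices of the path; to reach the contradiction with the maximality of $|N(x)\cap S|$ you need to pass to a shortest chordless subcycle (forcing every $t\in N(x)\cap S$ to be adjacent to the first internal vertex $p_1$, so that $N(x)\cap S\subseteq N(p_1)\cap S$, and then iterate towards $s$). This is a standard repair of a classical argument, so I regard it as a minor imprecision rather than a gap; the remainder of the endgame, including $K_i\cap K_j=S$ for the last colour change on the path, is airtight. (Your parenthetical about $K_j=S$ is vacuous: since $S\cup\{x\}$ is a clique, $S$ itself is never a maximal clique here.)
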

\begin{proof}
  The if direction is from the definition of $C(G)$.  Now suppose that $S$ is a minimal separator of $G$.  According to Blair and Peyton \cite[Theorem~4.3]{blair-91-chordal-graphs-clique-trees}, any clique tree of $G$ has two adjacent cliques whose intersection is $S$.  Since this clique tree is a subgraph of $C(G)$, there is an edge of $C(G)$ with label $S$.
\end{proof}

One can use Prim's algorithm to find a maximum spanning tree of $G$.  (Although proposed for the purpose of finding a minimum spanning tree, Prim's algorithm can be easily modified to find a maximum one.)  Starting from an arbitrary clique, it grows the tree by including one edge and one clique at a time, while the edge is chosen to have the largest weight among those crossing the partial tree that has been built, i.e., with one end in the current tree and the other not.  In the same spirit of graph search orderings, we can define a \emph{Prim ordering} to be the order maximal cliques of $G$ being included (visited) by Prim's algorithm, applied to $C(G)$.

Let $\pi$ be an ordering of the maximal cliques of $G$.  We say that an ordering $\sigma$ of $V(G)$ is \emph{generated by $\pi$} if $K_u <_\pi K_v$ implies $u <_\sigma v$, where $K_u$ and $K_v$ are the first maximal cliques in $\pi$ containing $u$, and respectively, $v$.  If $\pi = \langle K_1, K_2, \ldots, K_\ell\rangle$ and $c_i = |K_i\setminus \bigcup^{i-1}_{j=1}K_j|$ for $1\le i \le \ell$, then $\sigma$ can be represented as
\[
  \underbrace{\sigma^{-1}(1), \ldots, \sigma^{-1}(c_1)}_{K_1}, \; \underbrace{\sigma^{-1}(c_1 + 1), \ldots, \sigma^{-1}(c_1+c_2)}_{K_2\setminus K_1},\; \ldots, \underbrace{\sigma^{-1}(n-c_\ell+1), \ldots, \sigma^{-1}(n)}_{K_\ell\setminus \bigcup^{\ell-1}_{j=1}K_j}.
\]
The following has been essentially observed by Blair and Peyton~\cite{blair-91-chordal-graphs-clique-trees}, who however only stated explicitly one direction.  For the sake of completeness, we give a proof here.

\begin{lemma}
  \label{lemma:spanning-tree}
  Let $G$ be a chordal graph.  An ordering $\sigma$ of $V(G)$ is an \textsc{mcs} ordering of $G$ if and only if it is generated by some Prim ordering $\pi$ of $C(G)$.
\end{lemma}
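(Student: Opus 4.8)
The plan is to connect the greedy rule of \textsc{mcs} (visit an unvisited vertex with the most visited neighbors) to the greedy rule of Prim's algorithm on $C(G)$ (add the clique joined to the current tree by the heaviest crossing edge, whose weight is the size of a minimal separator by Proposition~\ref{lem:minimal-separators}). Recall that ``$\sigma$ is generated by $\pi$'' means exactly that $\sigma$ lists the vertices clique by clique, writing the new vertices of $K_1, K_2, \dots$ in the order of $\pi$, with an arbitrary internal order inside each block $K_i\setminus\bigcup_{j<i}K_j$. So I want to show that an \textsc{mcs} visits vertices in precisely such clique blocks and that the induced order on the cliques is a Prim order, and conversely that every block order coming from a Prim run yields a legal \textsc{mcs} ordering.

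The technical heart is a weight computation. Fix a clique tree $T$ of $G$ (a maximum spanning tree of $C(G)$) and suppose the visited set $U$ equals $\bigcup_{K\in\mathcal{T}}K$ for a subtree $\mathcal{T}$ of $T$. I first record the \emph{running-intersection property}: if $x\in K\cap K'$ for two nodes $K,K'$ of $T$, then $x$ lies in every clique on the $K$--$K'$ path. Since $v\notin U$, the maximal cliques containing $v$ form a subtree $S_v$ disjoint from $\mathcal{T}$; let $K^\ast$ be its node nearest to $\mathcal{T}$. The running-intersection property gives $N(v)\cap U=\bigcup_{K\ni v}(K\cap U)=K^\ast\cap U$, a single clique-intersection; hence it is itself a clique and, by Proposition~\ref{lem:minimal-separators}, a minimal separator, and its cardinality equals the weight of the first $T$-edge along the $K^\ast$--$\mathcal{T}$ path that crosses into $\mathcal{T}$. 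In particular $|N(v)\cap U|$ never exceeds the largest weight of a crossing edge, i.e.\ Prim's current maximum.

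For the ``generated by Prim $\Rightarrow$ \textsc{mcs}'' direction, let $\pi=\langle K_1,\dots,K_\ell\rangle$ be a Prim order building $T$ and let $\sigma$ be generated by it; I verify the \textsc{mcs} rule step by step. At the boundary where $U=\bigcup_{j\le i}K_j$, the first new vertex of $K_{i+1}$ has exactly $|K_{i+1}\cap K_{p}|$ visited neighbors, where $K_{p}$ is the Prim parent of $K_{i+1}$; this equals the weight of the edge Prim just chose, namely the maximum crossing weight, so by the weight computation no unvisited vertex has more, and the choice is legal. Inside the block, each remaining new vertex of $K_{i+1}$ retains a fixed number of visited neighbors in the frozen set $U$ and gains exactly one for every previously visited new vertex of $K_{i+1}$, so its count is the strict maximum; thus \textsc{mcs} may take the new vertices of $K_{i+1}$ in any order, exactly matching the freedom in ``generated by.''

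For the converse, given an \textsc{mcs} ordering $\sigma$ I build $\pi$ and $T$ online, maintaining the invariant that after each completed block $U$ is a subtree-union. When \textsc{mcs} selects, at a boundary, a vertex $v$ of maximum weight $s$, the set $S=N(v)\cap U$ is by the weight computation a clique and a minimal separator of size $s$; I take a maximal clique $K\supseteq S\cup\{v\}$ as the next clique, join it to the node of $\mathcal{T}$ containing $S$ by the $C(G)$-edge labelled $S$, and argue this edge carries the Prim-maximal weight $s$ (some clique attains the maximum crossing weight, and $s$ is the global maximum because \textsc{mcs} chose $v$). By the cut property of maximum spanning trees, attaching $K$ through a maximum crossing edge extends to a clique tree, so the resulting clique sequence is a genuine Prim order. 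The step I expect to be the main obstacle is precisely here: proving that a single \textsc{mcs} block is exactly the set of new vertices of one maximal clique that is \emph{legitimately attachable} to the current subtree by an edge of $C(G)$ of Prim-maximal weight. Concretely, one must show that the edge realizing the separator $S$ really joins $K$ to the already-built subtree (so that Prim could have selected it) and that \textsc{mcs} neither stops short of $K$ nor overshoots it; the within-block monotonicity of the weights keeps \textsc{mcs} inside $K$ until $K$ is exhausted, while the maximality of $s$ forbids it from entering any outside vertex earlier, but assembling these into a clean argument is the delicate part. Everything else follows fairly mechanically from the running-intersection property and Proposition~\ref{lem:minimal-separators}.
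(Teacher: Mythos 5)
Your ``if'' direction (generated by a Prim ordering $\Rightarrow$ \textsc{mcs}) is essentially sound and close in spirit to the paper's argument: the paper also proves only this direction, by induction on the prefix, using the fact that for any candidate $v_q$ the set $N(v_q)\cap A$ is a clique lying inside one already-visited maximal clique, hence of size at most the maximum crossing weight. Your ``weight computation'' via the running-intersection property packages the same inequality $|N(w)\cap U|\le$ (maximum crossing weight); the only blemishes there are overstatements (the count \emph{equals} the last crossing edge's weight, and $N(v)\cap U$ is a \emph{minimal separator}) --- in general $N(v)\cap U=K^\ast\cap K^{\ast\ast}$ for two possibly non-adjacent tree nodes, so you only get $\subseteq$ and you cannot invoke Proposition~\ref{lem:minimal-separators}. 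These are harmless for the forward direction but load-bearing where you reuse them later.

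The genuine gap is the ``only if'' direction, which you flag yourself but then try to bridge with a claim that is false. You propose to fix the next clique $K\supseteq S\cup\{v\}$ the moment \textsc{mcs} crosses the boundary at $v$, and assert that ``within-block monotonicity keeps \textsc{mcs} inside $K$ until $K$ is exhausted.'' It does not: the clique being completed is not determined by $v$, because $S\cup\{v\}$ may lie in several maximal cliques and \textsc{mcs} may legally wander into a different one. Concretely, take $K_0=\{0,1,2\}$, $K=\{1,2,3,4,5\}$, $K'=\{1,2,3,4,6\}$ (chordal; $5$ and $6$ nonadjacent). An \textsc{mcs} may begin $0,1,2,3$; at this point $S=\{1,2\}$ and both $K$ and $K'$ contain $S\cup\{3\}$. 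If your construction commits to $K$, \textsc{mcs} may next take $6\notin K$, since $4$, $5$, $6$ all tie with three visited neighbors --- so the block is $K'$, not $K$, and this is only visible a posteriori. The resulting ordering $0,1,2,3,6,4,5$ is still generated by the Prim ordering $(K_0,K',K)$, but your online rule would have built the wrong tree and then failed. Handling exactly this (that the block boundaries and the attached clique can be consistently identified from the whole \textsc{mcs} run, and that the attaching edge is a genuine maximum-weight crossing edge of $C(G)$) is the substance of Blair and Peyton's Lemma~4.8 and Theorem~4.10, which the paper cites for this direction rather than reproving. As it stands, your proposal does not establish the converse.
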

\begin{proof}
  The only if direction has been proved by Blair and Peyton~\cite[Lemma 4.8 and Theorem 4.10]{blair-91-chordal-graphs-clique-trees}.  Here we show the if direction.
Suppose that $\sigma$ is generated by $\pi$.  We may renumber the vertices in $G$ such that $\sigma = \langle v_1$, $v_2$, $\ldots$, $v_n\rangle$, and renumber the maximal cliques such that $\pi = \langle K_1$, $K_2$, $\ldots$, $K_\ell\rangle$.  Let $K'_i = K_i\setminus \bigcup^{i-1}_{j=1}K_j$ for $1\le i\le \ell$; note that $\{K'_1$, $K'_2$, $\ldots$, $K'_\ell\}$ is a partition of $V(G)$.
We show by induction that for each $1\le i\le n$, there is an \textsc{mcs} ordering of $G$ of which the first $i$ vertices are $v_1, \ldots, v_i$; in other words, among vertices $v_{i}$, $\ldots$, $v_n$, vertex $v_i$ has the maximum number of neighbors in the first $i - 1$ vertices.  It is vacuously true for $i = 1$.  Now suppose that it is true for $v_p$, we show that it is also true for $v_{p + 1}$.

When $v_{p + 1}\in K'_1 = K_1$, it is adjacent to all previous vertices and we are done.  In the rest $v_{p + 1}\in K'_t$ for some $t > 1$.  Let $A = \bigcup^{t - 1}_{j=1}K_j$; note that $v_{p + 1}\not\in A$.
For any $q> p$, let $G_q$ denote the the subgraph of $G$ induced by $v_1, v_2, \ldots, v_{p}$, and $v_q$.  By the induction hypothesis, $\langle v_1, v_2, \ldots, v_{p}, v_q\rangle$ is an \textsc{mcs} ordering of $G_q$.  Since $G_q$ is chordal, $v_q$ is simplicial in it.  Therefore, $N(v_q)\cap A$ is a clique for all $q > p$; denote it by $X_q$.  We argue by contradiction that there must be $1\le s < t$ such that $X_q \subseteq K_s$.  We find an $i$ with $1\le i < t$ such that $K_i\cap X_q$ is maximal.  If $X_q \not\subseteq K_i$, then there is a vertex $x\in X_q\setminus K_i$; let $K_j$, where $1\le j < t$, contain $x$.  By the maximality of $K_i\cap X_q$, there exists $y\in (X_q\cap K_i)\setminus K_j$.  Of the first $t - 1$ maximal cliques, those containing $K_i\cap X_q$ and those containing $X_q\setminus K_i$ are disjoint.  Prim's algorithm always maintains a tree of visited cliques, and this tree is a subtree of a clique tree of $G$.  Therefore, there is an \stsep{x}{y}.  But this is impossible because $x$ and $y$ are both in $X_q$, hence adjacent.

For each $q > p$, there is some maximal clique $K$ of $G$ that contains $(N(v_q)\cap A)\cup \{v_q\}$.  It cannot be one of $K_1$, $\ldots$, $K_{t - 1}$ because $v_q\not\in A$.  Since $K_1, \ldots, K_\ell$ is a Prim ordering of $C(G)$, we have $|N(v_{p + 1})\cap A| \ge |N(v_{q})\cap A|$ for all $q > p$.  On the other hand, $v_{p + 1}$ is adjacent to all vertices in $K_t$.   We can thus conclude that $v_{p + 1}$ has the maximum number of neighbors in $\{v_1, \ldots, v_p\}$, and this completes the proof.
\end{proof}

By Lemma~\ref{lemma:spanning-tree}, \textsc{mcs} orderings of a chordal graph $G$ can be fully characterized by Prim orderings of its weighted clique graph $C(G)$.  In particular, the \textsc{mcs} end vertices are the private vertices of the cliques last visited by Prim's algorithm.
Note that a vertex $v$ is simplicial if and only if it belongs to precisely one maximal clique, namely, $N[v]$, and a set of true twins can be visited in any order.

\begin{corollary}\label{cor:weighted-clique-tree}
Let  $z$ be a simplicial vertex in a chordal graph $G$.  There exists an \textsc{mcs} ordering of $G$ ended with $z$ if and only if there exists a Prim ordering of $C(G)$ ended with $N[z]$.
\end{corollary}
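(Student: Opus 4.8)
The plan is to reduce both directions to Lemma~\ref{lemma:spanning-tree}, after first establishing one structural fact about Prim orderings that does the real work: in any Prim ordering $\pi = \langle K_1,\ldots,K_\ell\rangle$, every clique introduces at least one private vertex, i.e.\ $K'_i := K_i \setminus \bigcup_{j<i}K_j \neq \emptyset$ for all $i$. I would prove this by letting $K_p$ (with $p<i$) be the clique to which Prim attaches $K_i$. Since the visited cliques always form a subtree of a clique tree of $G$, they meet $K_i$ only through the tree edge $K_iK_p$; hence for any $v \in K_i$ lying in some earlier clique, the connected set of cliques containing $v$ must pass through $K_p$, forcing $v \in K_i \cap K_p$. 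Thus $K'_i = K_i \setminus K_p$, which is nonempty because two distinct maximal cliques are incomparable. Equivalently, the block sizes $c_i = |K'_i|$ in the displayed representation of a $\pi$-generated ordering all satisfy $c_i \ge 1$.

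The backward direction is then short. If $\pi$ ends with $N[z]$, then $K_\ell = N[z]$, and since $z$ is simplicial it lies in no other maximal clique, so $z \in K'_\ell$. The relation ``generated by $\pi$'' places no constraint on the relative order of two vertices whose first cliques coincide, so within the last block I am free to put $z$ last; the resulting ordering $\sigma$ ends with $z$ and is an \textsc{mcs} ordering by Lemma~\ref{lemma:spanning-tree}.

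For the forward direction, I would take an \textsc{mcs} ordering $\sigma$ ending with $z$, write it as generated by a Prim ordering $\pi$ (Lemma~\ref{lemma:spanning-tree}), and let $K_t = N[z]$ be the unique maximal clique containing $z$, so that $z \in K'_t$. In a $\pi$-generated ordering the blocks $K'_1, \ldots, K'_\ell$ occupy consecutive, increasing ranges of positions, and by the fact above each is nonempty; hence if $t < \ell$ the last position of $K'_t$ is $\sum_{j\le t} c_j = n - \sum_{j>t} c_j \le n - (\ell - t) < n$, so $z$ could not occupy position $n$. Therefore $t = \ell$, i.e.\ $\pi$ ends with $N[z]$, as required. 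I expect the no-empty-block fact to be the only genuine obstacle: without it the generating $\pi$ might keep introducing new vertices after visiting $N[z]$, and one would instead have to argue separately that $N[z]$ can be slid to the end of $\pi$ while preserving Prim validity.
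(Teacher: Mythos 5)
Your proof is correct and follows the same route the paper intends: the paper states this corollary without proof, as an immediate consequence of Lemma~\ref{lemma:spanning-tree} together with the remarks that \textsc{mcs} end vertices are private vertices of the last-visited clique and that a simplicial vertex lies in exactly one maximal clique. The one fact you make explicit---that every clique in a Prim ordering contributes a nonempty private block, via the subtree-of-a-clique-tree argument---is precisely the detail the paper glosses over, and your proof of it is sound.
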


Let $S$ be a separator of $G$.  We abuse notation to use $C(G) - S$ to denote the subgraph of $C(G)$ obtained by deleting all edges whose labels are subsets of $S$.  The component of $C(G) - S$ containing $N[z]$ is called the $z$-component of $C(G) - S$.  It is worth noting that $C(G) - S$ cannot be mapped back to $G$.  In Figure~\ref{fig:chordal-example}, for example, $C(G) - \{v_5, v_6\}$ does not have edges among $K_2$, $\ldots$, $K_5$, while edges $K_7 K_8, K_7 K_9, K_8 K_9, K_9 K_{10}$ will be removed in $C(G) - \{v_{12}, v_{13}\}$.

\tikzstyle{clique}  = [{circle,blue,draw, inner sep = 0, minimum size=6mm}]
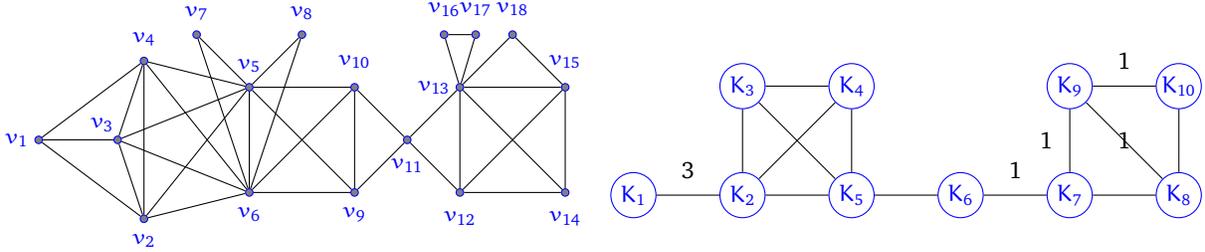
\begin{figure}[h]
  \centering
  \begin{tikzpicture}[every node/.style={vertex}, scale = .7]
    \scriptsize
    \foreach[count = \i from 1] \v/\l in {(0, 1)/left, (2, -.5)/below, (1.5, 1)/above left, (2, 2.5)/, (4, 2)/, (4, 0)/below, (3., 3)/, (5, 3)/, (6, 0)/below, (6, 2)/, (7, 1)/below, (8, 0)/below, (8, 2)/left, (10, 0)/below, (10, 2)/, (7.7, 3)/, (8.3, 3)/, (9, 3)/}
    \node["$v_{\i}$" \l] (v\i) at \v {};
    \foreach \i in {1, 5, 6}
    \foreach \j in {2, 3, 4}
    \draw (v\i) -- (v\j);
    \foreach \i in {5, 6}
    \foreach \j in {7, ..., 10}
    \draw (v\i) -- (v\j);
    \draw (v2) -- (v3) -- (v4) -- (v2);
    \draw (v5) -- (v6);
    \draw (v9) -- (v10) -- (v11) -- (v9);
    \foreach \i in {12, 13}
    \foreach \j in {11, 14, 15}
    \draw (v\i) -- (v\j);
    \draw (v13) -- (v16) -- (v17) -- (v13);
    \draw (v12) -- (v13) -- (v18) -- (v15) -- (v14);
  \end{tikzpicture}
  \;
  \begin{tikzpicture}[scale=1.45]
    \footnotesize
    \foreach[count = \i from 1] \v/\l in {(1, 1)/left, (2, 1)/below, (2, 2)/, (3, 2)/, (3, 1)/, (4, 1)/below, (5, 1)/, (6, 1)/below, (5, 2)/, (6, 2)/below}
    \node[clique] (k\i) at \v {$K_{\i}$};
    \draw (k1) --node["$3$"] {} (k2) -- (k3) -- (k4) -- (k5) -- (k2) -- (k4);
    \draw (k3) -- (k5) -- (k6) --node["$1$"] {} (k7) -- (k8) --node {$1$} (k9) --node["$1$"] {} (k10) -- (k8);
    \draw (k7) --node["$1$" left] {} (k9);
    \node at (2, 0.5) {};    
  \end{tikzpicture}
  \caption{A chordal graph $G$ on $18$ vertices (the left), and its weighted clique graph (the right), where all the omitted edge weights are $2$.  There are 10 maximal cliques $K_1 = \{v_1, v_2, v_3, v_4\}, K_2 = \{v_2, \ldots, v_6\}, K_3 = \{v_5, v_6, v_7\}, K_4 = \{v_5, v_6, v_8\}, K_5 = \{v_5, v_6, v_9, v_{10}\}, K_6 = \{v_9, v_{10}, v_{11}\}, K_7 = \{v_{11}, v_{12}, v_{13}\}, K_8 = \{v_{12}, \ldots, v_{15}\}, K_9 = \{v_{13}, v_{16}, v_{17}\}, K_{10} = \{v_{13}, v_{15}, v_{18}\}$.   There are $7$ simplicial vertices $v_1$, $v_7$, $v_8$, $v_{14}$, $v_{16}$, $v_{17}$, $v_{18}$, of which $v_{14}$ and $v_{18}$ are not \textsc{mcs} end vertices.}
  \label{fig:chordal-example}
\end{figure}

\begin{proposition}\label{lem:separators-on-clique-graph}
  Let $S$ be a separator of a chordal graph $G$.  For any vertex $v\not\in S$, maximal cliques containing $v$ remain connected in $C(G) - S$.   For any two distinct vertices $u, v\not\in S$, maximal cliques containing $u$ and $v$ are not connected in $C(G) - S$ if and only if $S$ is a \stsep{u}{v}.
\end{proposition}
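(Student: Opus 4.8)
The plan is to prove both assertions by matching up connectivity in $G-S$ with connectivity in $C(G)-S$. For the first assertion, I would fix a clique tree $T$ of $G$. By the characterization recalled above, $T$ is a maximum spanning tree of $C(G)$, hence a subgraph of $C(G)$, and by the defining property of clique trees the maximal cliques containing $v$ induce a subtree $T_v$ of $T$. For every edge $K_iK_j$ of $T_v$, both endpoints contain $v$, so the label $K_i\cap K_j$ of that edge contains $v$; since $v\notin S$ this label is not a subset of $S$, and therefore the edge is not deleted in $C(G)-S$. Thus $T_v$ survives as a connected subgraph of $C(G)-S$ spanning exactly the cliques containing $v$, which is the first assertion.

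For the second assertion I would establish the slightly stronger equivalence that there is a \stpath{u}{v} in $G-S$ if and only if some clique containing $u$ and some clique containing $v$ lie in the same component of $C(G)-S$. Negating both sides gives Claim~2, since $S$ being a \stsep{u}{v} means precisely that $u,v\notin S$ (already assumed) and that $u$ and $v$ lie in distinct components of $G-S$.

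The forward direction of this equivalence is where the first assertion does the work. Given a \stpath{u}{v}, say $u=x_0,x_1,\ldots,x_k=v$, lying entirely in $G-S$, each edge $x_ix_{i+1}$ is contained in some maximal clique $M_i$. The consecutive cliques $M_{i-1}$ and $M_i$ both contain $x_i$, and since $x_i\notin S$ the first assertion places them in a common component of $C(G)-S$; chaining these along the path shows $M_0$ (which contains $u$) and $M_{k-1}$ (which contains $v$) lie in the same component. Conversely, from a path $K_u=Q_0,Q_1,\ldots,Q_r=K_v$ in $C(G)-S$ with $u\in Q_0$ and $v\in Q_r$, every edge $Q_iQ_{i+1}$ survives, so its label $Q_i\cap Q_{i+1}$ is not a subset of $S$ and admits a witness $w_i\in(Q_i\cap Q_{i+1})\setminus S$. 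Because $w_{i-1}$ and $w_i$ both lie in the clique $Q_i$ they are adjacent or equal, and similarly $u$ is adjacent to (or equals) $w_0$ inside $Q_0$ and $v$ to $w_{r-1}$ inside $Q_r$; concatenating yields a walk $u,w_0,\ldots,w_{r-1},v$ avoiding $S$, hence a \stpath{u}{v} in $G-S$.

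The routine part is handling the degenerate cases where $u$ and $v$ share a clique (namely $r=0$, or $k\le 1$), which are immediate. The only point demanding a little care is the converse direction, where I must ensure the chosen witnesses both avoid $S$ and sit in a common clique so that the constructed walk never leaves $G-S$; this is exactly the reason $C(G)-S$ is defined by deleting an edge precisely when its \emph{entire} label lies inside $S$. I do not anticipate a serious obstacle, since the first assertion carries the substance and the remainder is threading paths through shared cliques.
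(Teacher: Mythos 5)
Your proof is correct and takes essentially the same approach as the paper: the first assertion via the surviving subtree of cliques containing $v$ in a clique tree (whose edge labels all contain $v$), and the second by translating paths between $G-S$ and $C(G)-S$, extracting witnesses $w_i\in(Q_i\cap Q_{i+1})\setminus S$ from surviving edge labels for one direction and chaining components through the shared vertices $x_i\notin S$ for the other. The paper phrases the two directions contrapositively (by contradiction for the "if" direction, and by showing every $u$-$v$ path in $G$ meets $S$ for the "only if"), but the underlying arguments are the same as yours.
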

\begin{proof}
  By definition, the maximal cliques containing $v$ are connected in any clique tree of $G$.  Since a clique tree of $G$ is a subgraph of $C(G)$, these cliques also induce a connected subgraph in $C(G)$.  For any edge in this subgraph, its label contains $v$, hence not a subset of $S$.  Therefore, these cliques induce the same connected subgraph in $C(G) - S$ as in $C(G)$.

  For the second assertion, we may assume  $uv\not\in E(G)$: Both sides are trivially false when $uv\in E(G)$.
Suppose to the contradiction of the if direction that there is a path $K_0,\ldots,K_p$ in $C(G) - S$ such that $u\in K_0$ and $v\in K_p$ while $u, v\not\in K_i$ for $0 < i < p$.   For each $1\le i \le p$, we can find a vertex $x_i\in (K_{i-1}\cap K_i)\setminus S$.  (These $p$ vertices may or may not be distinct.)  Then $u x_1, x_p v\in E(G)$, while $x_i$ and $x_{i + 1}$ are either the same or adjacent for all $1\le i < p$.  We have thus a \stpath{u}{v} in $G$ avoiding $S$, contradiction that $S$ is a \stsep{u}{v}.

We now consider the only if direction.  Let $u=x_0, x_1, \ldots, x_p= v$ be any \stpath{u}{v} in $G$.  Note that for each $0\le i\le p$, maximal cliques containing $x_i$ induce a connected subgraph, while for each $1\le j\le p$, there is a maximal clique containing both $x_{j - 1}$ and $x_j$.  We can find a path in $C(G)$ of which one end contains $u$ and the other contains $v$.  For each edge on this path, its label contains one of $x_i$, $0< i < p$.  Since maximal cliques containing $u$ and $v$ are not connected in $C(G) - S$, the label of at least one edge on this path is a subset of $S$.  By the first assertion, at least one of $x_1, \ldots x_{p-1}$ is in $S$.  In other words, every \stpath{u}{v} intersects $S$.  Therefore, $S$ is a \stsep{u}{v}.  This concludes the proof.
\end{proof}

We say that a minimum-weight edge $e$ of $C(G)$---by Proposition~\ref{lem:minimal-separators}, its label is a minimum separator of $G$,---is a \emph{critical edge} for maximal clique $K$ if one end of $e$ is in the same component as $K$ after all minimum-weight edges, including $e$, are removed from $C(G)$.  In other words, there is a path connecting $K$ and $e$ on which every edge has weight larger than $e$.  
In Figure~\ref{fig:chordal-example}, for example, $K_6 K_7$ is a critical edge for all cliques but $K_9$, while $K_8 K_9$ and $K_{10} K_9$ are critical edges for $K_8$ and $K_{10}$ respectively.  The following fact explains ``critical'' in the name.

\begin{proposition}\label{cor:critical-edge}
  Let $z$ be a simplicial vertex of a connected chordal graph $G$, and let $S_1$, $\ldots$, $S_k$ be the labels of all critical edges for $N[z]$.  In any Prim ordering of $C(G)$, cliques in the $z$-component of $C(G) - S_1 - \cdots - S_k$ appear consecutively.
    Moreover, if $S_1 = \cdots = S_k$, then the $z$-component of $C(G) - S_1$ can be visited in the end.
\end{proposition}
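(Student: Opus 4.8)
The plan is to identify the $z$-component of $C(G)-S_1-\cdots-S_k$ with a purely weight-theoretic object and then reason about Prim's behaviour relative to it. Write $w$ for the minimum edge weight of $C(G)$, call an edge \emph{heavy} if its weight exceeds $w$, and let $D$ be the \emph{heavy component} of $N[z]$, i.e.\ the component containing $N[z]$ in the subgraph of $C(G)$ consisting of all heavy edges. I claim the $z$-component of $C(G)-S_1-\cdots-S_k$ is exactly $D$. Indeed, every edge leaving $D$ is a minimum-weight edge (heavy edges stay inside heavy components), and each such edge has an endpoint in $D$, so by definition it is a critical edge for $N[z]$; hence its label is one of $S_1,\dots,S_k$ and it is deleted. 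Conversely no internal edge of $D$ is deleted, since such an edge is heavy and its label, having more than $w$ vertices, cannot be a subset of any $S_i$ (each $S_i$ is a minimum separator, so $|S_i|=w$). Thus $D$ is isolated from the rest while remaining internally connected, giving $z$-component $=D$.

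Both assertions then reduce to statements about how Prim visits heavy components. For the first assertion I would prove the following invariant: during any run of Prim's algorithm on $C(G)$, at most one heavy component is \emph{active}, i.e.\ simultaneously touched and not yet fully visited. The point is that while a heavy component $D'$ is active, connectivity of $D'$ by heavy edges forces a heavy crossing edge inside $D'$, whose weight exceeds $w$; since every edge leaving any heavy component has weight exactly $w$, the maximum-weight crossing edge must lie inside an active component, and by the invariant that component is $D'$. Hence Prim never leaves an active heavy component until it is exhausted, and a fresh heavy component is entered only through a weight-$w$ edge while no component is active. An easy induction on the number of visited cliques yields the invariant, and it immediately gives that every heavy component, in particular $D$, is visited consecutively.

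For the ``moreover'' part, assume $S_1=\cdots=S_k=S$. Contract each heavy component of $C(G)$ to a single node and keep the minimum-weight edges joining distinct heavy components; call the resulting graph $H$. By the consecutive property, a Prim ordering visits the heavy components in some order in which each component after the first is $H$-adjacent to an earlier one, and since all edges of $H$ carry the same weight $w$, ties let me realise this order as any connected search order of $H$. Thus I can make $D$ the last visited component provided $H-D$ is connected: run a connected search of $H-D$ first, breaking every weight-$w$ tie away from $D$ (always possible while $H-D$ still has an unvisited node), and enter $D$ only when forced, at which point it is completed as the final consecutive block. So the whole statement comes down to showing that $H-D$ is connected, and this is the step I expect to be the main obstacle.

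To establish connectivity of $H-D$ I would translate to $G-S$. Using Proposition~\ref{lem:separators-on-clique-graph}, the components of $C(G)-S$ correspond to the components of $G-S$, and the computation above shows the $z$-side component is exactly $D$; the remaining heavy components are grouped, by the surviving minimum-weight edges whose label is not $S$, into connected blobs, one per non-$z$ component of $G-S$. It remains to link these blobs inside $H-D$, i.e.\ through label-$S$ edges. Here I would use that $S$ is a \emph{minimum} separator: for any component $C_i$ of $G-S$ its neighbourhood $N(C_i)$ is a separator contained in $S$, so $|N(C_i)|\ge w=|S|$ forces $N(C_i)=S$; every component of $G-S$ is therefore full, each contains a maximal clique $K_i\supseteq S$, and for two such cliques $K_i,K_j$ on different sides one checks $K_i\cap K_j=S$ and that $S$ is a minimal separator for the corresponding private vertices, so $K_iK_j$ is a label-$S$ edge of $C(G)$. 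These edges connect the blobs pairwise, making $H-D$ connected and completing the plan.
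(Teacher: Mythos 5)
Your proposal is correct and, at its core, follows the same route as the paper's proof: both arguments rest on the observation that the $z$-component $T$ is held together by edges of weight exceeding the minimum weight $w$ while every edge leaving it is a critical edge of weight exactly $w$, so Prim cannot leave $T$ once it has entered it; and both prove the ``moreover'' part by showing that the cliques outside $T$ are connected, using that every component of $G-S$ is full (because $S$ is a minimum separator) and hence any two components of $C(G)-S$ are joined by a label-$S$ edge. What you do differently is to package the first assertion as a general invariant---at most one heavy component is ever active during a run of Prim's algorithm---which yields consecutiveness for \emph{every} heavy component simultaneously, whereas the paper argues only about $T$ by examining the step at which its first clique is reached; your version is a little more general and arguably cleaner, and the contraction to $H$ makes the tie-breaking in the ``moreover'' part transparent. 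One small caveat: your assertion that ``no internal edge of $D$ is deleted, since such an edge is heavy'' presumes that every edge of $C(G)$ with both ends in $D$ is heavy. This happens to be true---a weight-$w$ edge with label $S$ has its two ends in different components of $C(G)-S$ by Proposition~\ref{lem:separators-on-clique-graph}, so they cannot also be joined by a path of heavy edges---but you do not prove it, and for your identification of the $z$-component with $D$ it is enough, and simpler, to note that the heavy edges of $D$ survive the deletion and already connect $D$.
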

\begin{proof}
  Note that $C(G)$ is connected since $G$ is connected.  Let $T$ denote the $z$-component of $C(G) - S_1 - \cdots -S_k$.  Being minimum separators of $G$, all of $S_1$, $\ldots$, $S_k$ have the same size; let it be $t$.
  Note that the weight of every edge in $T$ is strictly larger than $t$; otherwise, we can find a path from $N[z]$ to such an edge in $T$, and identify another critical edge for $N[z]$ on this path. 

Let $\pi$ be any Prim ordering of $C(G)$.  We consider the first maximal clique $K$ in $T$ visited by $\pi$.  If $\pi(K) \ne 1$, the edge leading to $K$ has weight $t$.  By Prim's algorithm, when $K$ is visited, for each clique $K'$ with $K' <_\pi K$, all the edges between $K'$ and its unvisited neighbors have weight $t$.  All edges between $T$ and other components have weight $t$ as well, while all edges inside $T$ have weight $> t$.  Therefore, the maximal cliques in $T$ must be finished before a clique out of $T$ is visited.  This concludes the first assertion.

For the second assertion, suppose that $S = S_1 = \cdots =S_k$.  We give a Prim ordering that visits cliques in $T$ in the end.  It starts from a clique not in $T$, and it suffices to show that all cliques out of $T$ have been visited before the first in $T$.  By the definition of $C(G)$, in each component of $C(G) - S$, there is a maximal clique containing $S$.  Therefore, by {Proposition}~\ref{lem:separators-on-clique-graph}, there is an edge with label $S$ between any two components of $C(G) - S$.  In other words, the cliques not in $T$ are connected in $C(G)$.  Since the edges connecting $T$ and other components of $C(G) - S$ have weight $t$, the minimum in $C(G)$,  Prim's algorithm can always choose another edge.  Therefore, we can finish them before entering $T$.
\end{proof}

Whether a simplicial vertex $z$ can be an \textsc{mcs} end vertex turns out to be closely related to the critical edges for $N[z]$.  We first present a necessary condition, which is not satisfied by $v_{14}$ and $v_{18}$ in Figure~\ref{fig:chordal-example}; we leave it to the reader to verify that they cannot be \textsc{mcs} end vertices.  
\begin{lemma}\label{lem:critical-edge}
  Let $z$ be a simplicial vertex of a connected chordal graph $G$.  If $N[z]$ is the end clique of a Prim ordering of $C(G)$, then all critical edges for $N[z]$ have the same label.
\end{lemma}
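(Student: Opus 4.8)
The plan is to reduce the statement to a connectivity property of the clique graph with its minimum-weight edges removed, and then to exploit the induced-subtree structure of clique trees. Throughout, let $t$ be the minimum weight of an edge of $C(G)$ (equivalently, by Proposition~\ref{lem:minimal-separators}, the size of a minimum separator of $G$), let $S_1,\dots,S_k$ be the labels of all critical edges for $N[z]$, and let $T$ be the $z$-component of $C(G)-S_1-\cdots-S_k$; as in the proof of Proposition~\ref{cor:critical-edge}, this is exactly the component of $N[z]$ obtained by deleting \emph{all} minimum-weight edges, and its internal edges all have weight $>t$. I first record a structural fact I will use freely: every minimum-weight edge of $C(G)$ joins two \emph{different} such components. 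Indeed, if $e$ has label $S$ with $|S|=t$, then by Proposition~\ref{lem:separators-on-clique-graph} its two endpoints are already disconnected in $C(G)-S$, hence also after deleting all minimum-weight edges. In particular each critical edge has exactly one endpoint in $T$, and (whenever $C(G)$ has an edge at all) the number $c$ of components is at least two.

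First I would turn ``$N[z]$ is the end clique of a Prim ordering'' into a connectivity statement. By the first assertion of Proposition~\ref{cor:critical-edge}, the cliques of $T$ occur consecutively in every Prim ordering; since $N[z]\in T$ is last, the block of $T$ must occupy the final positions, so every clique outside $T$ is visited before any clique of $T$. At the instant just before $T$ is entered, the partial Prim tree is connected and uses only cliques, hence only edges, outside $T$. Therefore the subgraph $H$ of $C(G)$ induced by the cliques outside $T$ is connected. It thus suffices to prove the contrapositive-free form: \emph{if $H$ is connected, then all critical edges share one label.}

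For the main argument I would fix a clique tree $\mathcal{T}$ of $G$, i.e.\ (as recalled in the text) a maximum spanning tree of $C(G)$. Since each component of the minimum-weight-edge deletion is internally spanned by edges of weight $>t$, exactly $c-1$ minimum-weight edges of $\mathcal{T}$ are needed to join the $c$ components, and contracting each component turns $\mathcal{T}$ into a tree $\hat{\mathcal{T}}$ on the components. Let $U_1,\dots,U_d$ be the subtrees of $\hat{\mathcal{T}}-T$ hanging off $T$, and let $L_1,\dots,L_d$ be the labels of the tree minimum-weight edges joining $T$ to them. Using the induced-subtree property of clique trees (the cliques containing a fixed vertex form a subtree of $\mathcal{T}$), I would show that $\{L_1,\dots,L_d\}$ is exactly the set of critical-edge labels: for a critical edge $AB$ with $A\in T$, every edge on the $\mathcal{T}$-path from $A$ to $B$ has label containing $A\cap B$, and a minimum-weight such edge, having size $t=|A\cap B|$, must equal $A\cap B$; the first minimum-weight edge on the path is the tree edge where the path leaves $T$, so its label $L_j$ equals the label of $AB$.

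The crux, and the step I expect to be the main obstacle, is to control the \emph{non-tree} minimum-weight edges, which could a priori reconnect two different subtrees inside $H$ even when their labels differ. I would prove the following crossing lemma: if a minimum-weight edge $f=MM'$ joins a clique of $U_a$ to a clique of $U_b$ with $a\neq b$, then $L_a=L_b$. The proof again uses the induced-subtree property: the $\mathcal{T}$-path from $M$ to $M'$ must pass through $T$, crossing the two tree edges labelled $L_a$ and $L_b$; every edge on this path has label containing $M\cap M'$ (the label of $f$), and equality of sizes forces $L_a=M\cap M'=L_b$. Consequently, inside $H$ one can pass from one subtree to another only when their labels coincide, so the connected components of $H$ are precisely the unions of subtrees sharing a common label. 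Hence if $H$ is connected all the $L_j$ are equal, and by the identification of the previous paragraph all critical edges carry the same label. The remaining cases are immediate: if $T$ has a single incident subtree there is only one critical label, and if $C(G)$ is a single clique there are no critical edges.
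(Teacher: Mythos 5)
Your proof is correct, but it takes a genuinely different route from the paper's. The paper argues by direct contradiction: it takes two critical edges with distinct labels $S_1\ne S_2$, shows (via Proposition~\ref{lem:separators-on-clique-graph}) that the components of $C(G)-S_1$ and of $C(G)-S_2$ not containing $N[z]$ are pairwise disjoint with no edges between them, and then tracks the Prim ordering: once it enters the $z$-component of $C(G)-S_1$ it must exhaust it---including $N[z]$---before it can cross any edge labelled $S_2$, so $N[z]$ cannot be last. You instead distill from the Prim hypothesis only the static fact that the cliques outside $T$ induce a connected subgraph $H$ of $C(G)$, and then prove the purely structural implication ``$H$ connected $\Rightarrow$ one critical label'' by maximum-spanning-tree machinery: contract the components of the weight-$>t$ subgraph of a clique tree, identify the critical labels with the labels of the tree edges leaving $T$, and use the induced-subtree property to show that any edge of $C(G)$ directly joining two different branches off $T$ forces their labels to coincide. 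All the supporting claims check out (in particular, that every minimum-weight edge joins two distinct components, that $T$ coincides with the $z$-component of $C(G)-S_1-\cdots-S_k$, and that a maximum spanning tree restricted to each weight-$>t$ component spans it). Your version is longer and imports the MST-threshold lemma, but it buys a reusable, search-free criterion (connectivity of $C(G)$ minus $T$) that meshes naturally with the ``every clique not in the $z$-component of $C(G)-S_1$ can be the start clique'' clause of Theorem~\ref{thm:mcs-characterization}; the paper's argument is shorter and stays entirely within the vocabulary of components of $C(G)-S_i$ already set up for Proposition~\ref{cor:critical-edge}.
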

\begin{proof}
  Suppose for contradiction that there are two critical edges $e_1$ and $e_2$ for $N[z]$ with different labels.   For $i = 1, 2$, let $S_i$ be the label of $e_i$, and let ${\cal C}_i$ denote the set of components of $C(G) - S_i$ not containing $N[z]$.  We argue that for any $U_1\in {\cal C}_1$ and $U_2\in {\cal C}_2$, they are different and there is no edge between them.

  For $i = 1, 2$, by the definition of critical edges, there is a path from $N[z]$ to $e_i$; let $K_i$ denote the end of $e_i$ that is closer to $N[z]$ on this path.  There must be some clique $K'_i$ in $U_i$ containing $S_1$.
  Note that $K_i\cap K'_i = S_i$ because $K'_i$ and $K_i$ are in different components of $C(G) - S_i$.  Hence, $K_i K'_i$ is also a critical edge with label $S_i$ for $N[z]$.   There is a \stpath{N[z]}{K'_2} in $C(G) - S_1$, and hence $K'_2$ and $N[z]$ are connected in $C(G) - S_1$.  Likewise,  $K'_1$ and $N[z]$ are connected in $C(G) - S_2$.

  Since $S_1\ne S_2$ and they have the same cardinality, we can find $v_2\in S_2\setminus S_1\subset K'_2$.
  By Proposition~\ref{lem:separators-on-clique-graph}, $S_1$ is not a \stsep{z}{v_2}.  Thus, no maximal clique in $U_1$ contains $v_2$.  It follows that $U_1$ remains connected in $C(G) - S_2$ (note that $S_2$ is a minimum separator).  For the same reason, $U_2$ remains connected in $C(G) - S_1$.  If there exists an edge between $U_1$ and $U_2$, then this edge remains in at least one of $C(G) - S_1$ and $C(G) - S_2$:  It cannot have both labels $S_1$ and $S_2$.  But then $U_1$ and $U_2$ are connected in $C(G) - S_1$ or $C(G) - S_2$, neither of which is possible.

We can thus conclude that components in ${\cal C}_1\cup {\cal C}_2$ are disjoint and there is no edge among them.
  
    Let $\pi$ be a Prim ordering of $C(G)$ ended with $N[z]$.  Assume without loss of generality that the first visited clique in these components is from $U_1\in {\cal C}_1$, then we show that $N[z]$ is visited before all components $U_2\in {\cal C}_2$.  Since there is no edge between $U_1$ and $U_2$, before visiting $U_2$, it must visit a clique from the $z$-component of $C(G) - S_1$.  After that, however, it will not visit any edge of label $S_2$ before finishing this component.  Therefore $N[z]$ cannot be the end clique, a contradiction.  This concludes the proof.
\end{proof}

In other words, if $z$ is an \textsc{mcs} end vertex, then there is a unique minimum separator of $G$ that is ``closest to $z$'' in a sense.  This, although not sufficient, can be extended to a sufficient condition for \textsc{mcs} end vertices as follows.
To decide whether a simplicial vertex $z$ is an \textsc{mcs} end vertex, we can find the minimum separator $S$ in Proposition~\ref{cor:critical-edge} and focus on how the $z$-component of $C(G) - S$ is explored.  We have to start from a maximal clique not in it, and after that visit all maximal cliques in other components of $C(G) - S$ before the $z$-component.  In this juncture we may view the $z$-component as a separate graph and find all critical edges for $N[z]$ with respect to this component.  They also need to have the same label; suppose it is $S'$, which is strictly larger than $S$.  But this is not sufficient because we need to make sure that when $S$ is crossed, it can reach a maximal clique not in the $z$-component of $C(G) - S'$.
In Figure~\ref{fig:chordal-example}, if we delete vertices $v_{16}$ and $v_{17}$, (hence $K_9$,) then $K_6 K_7$ is the only critical edge for $K_8$.  The condition of Lemma~\ref{lem:critical-edge} is vacuously satisfied, but $v_{14}$ is still not  an \textsc{mcs} end vertex.  (Now $v_{18}$ is.)

Repeating this step recursively, we should obtain a sequence of separators with  increasing cardinalities.  Note that we only need to keep track of how these separators are crossed, while the ordering in each layer is irrelevant.  
This observation leads us to the following characterization, which subsumes Theorem~13 of Beisegel et al.~\cite{beisegel-18-end-vertex}.
For example, the sequence of critical edges for $N[v_1]$ in Figure~\ref{fig:chordal-example} are $K_6 K_7$, $K_2 K_5$, and $K_1 K_2$, which correspond to minimal separators $\{v_{11}\}$, $\{v_5, v_6\}$, and $\{v_2, v_3, v_4\}$, respectively.

\begin{theorem}\label{thm:mcs-characterization}
  Let $z$ be a simplicial vertex of a connected chordal graph $G$.  The clique $N[z]$ is a Prim end clique if and only if there is a sequence of edges $e_1$, $e_2$, $\ldots$, $e_k$ in $C(G)$, where the label of $e_i$ is $S_i$, on a path ended with $N[z]$ such that 
  \begin{enumerate}[(i)]
  \item $S_1$ is the label of critical edges for $N[z]$ and $S_k$ is the set of non-simplicial vertices in $N[z]$; and
  \item  for $1\le i < k$, in the $z$-component of $C(G) -S_i$, all the critical edges for $N[z]$ have the same lable, which is $S_{i+1}$.
  \end{enumerate}
  Moreover, every clique not in the $z$-component of $C(G) - S_1$ can be the start clique.
\end{theorem}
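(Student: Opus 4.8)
The plan is to prove the equivalence by induction on the number of maximal cliques of $G$, peeling off one separator per level and using Lemma~\ref{lem:critical-edge} and Proposition~\ref{cor:critical-edge} as the workhorses. The backbone of the argument is a clean, label-free description of the components in condition~(ii): I would first show that, \emph{whenever the sequence exists}, the set $z\text{-comp}(C(G)-S_i)$ is exactly the component of $N[z]$ in the subgraph of $C(G)$ formed by the edges of weight greater than $|S_i|$. Indeed, if every critical edge of $z\text{-comp}(C(G)-S_i)$ carries the same label $S_{i+1}$, then the minimum weight inside this component is $|S_{i+1}|>|S_i|$ (the first minimum-weight edge on any path from $N[z]$ is critical, so its label must be $S_{i+1}$), whence every edge of the component is heavier than $|S_i|$ and the component collapses to the weight-threshold one. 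This gives the strict chain $|S_1|<\cdots<|S_k|$, the nesting $z\text{-comp}(C(G)-S_1)\supseteq\cdots\supseteq z\text{-comp}(C(G)-S_k)=\{N[z]\}$, and (following the recursive description sketched in the Introduction) identifies $z\text{-comp}(C(G)-S_1)$ with $C(G')$ for the connected chordal graph $G'=G[N[C]]$, where $C$ is the component of $G-S_1$ containing $z$. This is the reduction that feeds the induction.

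For the only-if direction, assume $N[z]$ is a Prim end clique of $C(G)$. Lemma~\ref{lem:critical-edge} yields the common label $S_1$ of the critical edges, and Proposition~\ref{cor:critical-edge} shows that the block $z\text{-comp}(C(G)-S_1)=C(G')$ is visited consecutively, hence last, so the restriction of the ordering is a Prim ordering of $C(G')$ ending at $N[z]$. If this block is the single clique $N[z]$, then every edge leaving $N[z]$ has label a subset of $S_1$, so $S_1$ is precisely the set of non-simplicial vertices of $N[z]$ and $k=1$; otherwise I apply the induction hypothesis to $G'$ to obtain $e_2,\dots,e_k$, prepend $e_1$ (a critical edge whose near endpoint lies on the heavier $N[z]$-to-$e_1$ path given by criticality), and glue the two paths. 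The equalities of condition~(ii) across the two graphs follow from the weight-threshold description above, which guarantees $z\text{-comp}(C(G)-S_i)=z\text{-comp}(C(G')-S_i)$ for $i\ge 2$; the only bookkeeping left is that the non-simplicial vertices of $N[z]$ are the same whether read off in $G$ or in $G'$, which holds because $G'$ contains $N[z]$ as an induced subgraph.

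For the if direction I construct, by induction, a Prim ordering ending at $N[z]$, and here I strengthen the statement to carry the ``Moreover'' clause as part of the induction hypothesis, asserting that any clique outside $z\text{-comp}(C(G)-S_1)$ may serve as the start. Using the second assertion of Proposition~\ref{cor:critical-edge} (applicable because all critical edges share the label $S_1$), I start at an arbitrary clique outside the block $B_1:=z\text{-comp}(C(G)-S_1)=C(G')$, exhaust everything outside, and enter $B_1$ across an $S_1$-labelled boundary edge; the entry clique then plays the role of the start clique for the $G'$-subproblem. Invoking the induction hypothesis on $G'$---whose sequence $e_2,\dots,e_k$ satisfies the hypotheses by the weight-threshold identification---routes the remaining visits so that $N[z]$ comes last, and concatenation produces the desired ordering of $C(G)$. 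The ``Moreover'' clause for $G$ itself is immediate, since Prim may begin at any clique outside $B_1$.

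I expect the genuine obstacle to be exactly the interface between the two levels of the induction in the if direction: showing that the clique at which Prim is forced to enter $B_1$ can be taken to lie \emph{outside} $z\text{-comp}(B_1-S_2)$, so that it is an admissible start for the $G'$-subproblem. This is the reason condition~(ii) is phrased on the \emph{literal} component $z\text{-comp}(C(G)-S_i)$ rather than on the recursively restricted one: a ``leak,'' in which a light boundary edge keeps an outside clique attached to the $z$-side after deleting $S_i$, would introduce critical edges of a smaller label and thereby \emph{violate} condition~(ii). Thus the uniform-label requirement at every level is self-enforcing---it simultaneously rules out leaks (via the weight-threshold collapse of the first paragraph) and certifies that a valid entry clique exists. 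Making this entry-selection argument precise, together with verifying that $C(G')$ really coincides with the induced subgraph on $B_1$ and carries the same labels and weights, is where the care is concentrated; the remaining steps are routine once the reduction is in place.
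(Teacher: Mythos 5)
Your high-level plan is the same as the paper's---induct over the separator sequence, with Lemma~\ref{lem:critical-edge} supplying the common label at each level and Proposition~\ref{cor:critical-edge} showing that the $z$-block is visited consecutively and last---but the two points you yourself flag as ``where the care is concentrated'' are exactly where the proof has to do real work, and your proposal does not do it. Take the weight-threshold identification first. Your argument that the minimum weight inside $z\text{-comp}(C(G)-S_i)$ equals $|S_{i+1}|$ is fine, but the inequality $|S_{i+1}|>|S_i|$ is precisely the ``no leak'' statement you are trying to establish, and for $i\ge 2$ the justification is circular: the literal component $z\text{-comp}(C(G)-S_i)$ deletes only edges labelled by subsets of $S_i$, so it could a priori be re-entered through an edge labelled by a subset of $S_1$ (of weight at most $|S_1|<|S_i|$) that is not a subset of $S_i$. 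The labels $S_1,\dots,S_k$ are in general not nested---in Figure~\ref{fig:chordal-example} the sequence for $N[v_1]$ is $\{v_{11}\}, \{v_5,v_6\}, \{v_2,v_3,v_4\}$---so such a leak is not ruled out by set inclusion. And if a leak occurred, the critical edges of the literal component would simply all carry some small label, which does not by itself violate condition~(ii) as stated; the condition is therefore not ``self-enforcing'' in the way you assert. The paper sidesteps this entirely by defining the components recursively, $T_i=z\text{-comp}(T_{i-1}-S_i)$ with $T_0=C(G)$, and running the whole argument on those.

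The second gap is the entry-clique problem in the if direction: you need the clique at which Prim enters $B_1$ to lie outside $z\text{-comp}(B_1-S_2)$ so that the strengthened induction hypothesis applies, and you explicitly leave this open. The paper's resolution lives in the \emph{only-if} direction: it proves that $T_i\setminus T_{i+1}$ must contain a clique $K\supseteq S_i$ (otherwise the first component of $T_i-S_{i+1}$ entered would be the $z$-component, so $N[z]$ could not be last) and then \emph{re-routes} $e_i$ to end at this $K$, so that the certified path already has every entry clique positioned outside the next block; the if direction is then only ever invoked on paths of this form. Without an analogue of this existence-and-re-routing step your induction does not close. Finally, your reduction to $G'=G[N[C]]$ carries the extra burden of verifying that $C(G')$ is the induced subgraph of $C(G)$ on $B_1$ with identical labels (minimal separators of $G'$ versus those of $G$), which you acknowledge but do not prove; the paper avoids this by never leaving $C(G)$.
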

\begin{proof}
  We first show the if direction.  We may denote the two ends of $e_i$ by $K_i$ and $K'_i$, where $K'_i$ is in the $z$-component of $C(G) - S_i$.  (It is possible that $K'_i = K_{i+1}$ for some $1\le i < k$.)  For each $1\le i\le k$, we visit all the other components of $C(G) - S_i$ before using the edge $K_i K'_i$ to enter the  $z$-component, visiting $K'_i$.  
  This is possible because of Proposition~\ref{cor:critical-edge}, and as such we produce a Prim ordering of $C(G)$ that ends with $N[z]$.

  Now consider the only if direction, for which we construct the stated path by induction: We find the edges $e_1$, $e_2$, $\ldots$, $e_k$ in order, and show that for each $1\le i \le k$, the first $i$ edges can be extended to a path that ends with $N[z]$ and satisfies both conditions.  The first edge $e_1$ can be any critical edge for $N[z]$, and it is on a path ended with $N[z]$ because $C(G)$ is connected.  Now suppose that the first $i$ edges, namely, $e_1$, $\ldots$, $e_i$, have been selected, and we find $e_{i + 1}$ as follows.  For each $1\le j \le i$, let $T_j$ denote the $z$-component of $T_{j-1} - S_j$, where $T_0 = C(G)$.  If $T_i$ comprises the only maximal clique $N[z]$, we are done.

  Containing $N[z]$, cliques in $T_i$ are last visited by Proposition~\ref{cor:critical-edge}.  It is also a Prim ordering of the component itself.  Therefore, Lemma~\ref{lem:critical-edge} applies, and all the critical edges for $N[z]$ in $T_i$ have the same label.   Let $S_{i + 1}$ be this label, and let $T_{i+1}$ be the $z$-component of $T_i - S_{i+1}$.  We argue that there must be a maximal clique $K$ in $T_i - T_{i + 1}$ containing $S_i$; otherwise, the first component visited in $T_i - S_{i+1}$ would be the $z$-component, and then $N[z]$ cannot be the last visited clique.  We can use edge $K_i K$ to replace $e_i$,---note that they have the same label,---and choose any edge between $K$ and $N[z]$ with label $S_{i + 1}$ as $e_{i+1}$.  This concludes the inductive step and the proof.
\end{proof}

The proof of the only if direction of Theorem~\ref{thm:mcs-characterization} can be directly translated into an algorithm to decide Prim end cliques, implying a polynomial-time algorithm for the \textsc{mcs} end vertex problem on chordal graphs.  This algorithm however has to take $\Omega(n^2)$ time because the size of $C(G)$.
We show a very simple algorithm below, which itself best reveals the spirit of graph searches.  As long as we cross the separators in the order specified in Theorem~\ref{thm:mcs-characterization}, and make sure we finish other components before visiting the $z$-component, then it is the Prim ordering we need.  On the other hand, a run of Prim's algorithm started from $N[z]$ will cross the separators in the reversed order, and before crossing the $i$th separator $S_i$, it has to exhaust the whole $z$-component $C(G) - S_i$.  
\begin{figure}[h!]
  \tikz\path (0,0) node[draw=gray!50, text width=.9\textwidth, rectangle, rounded corners, inner xsep=20pt, inner ysep=10pt]{
    \begin{minipage}[t!]{\textwidth} \small
  {\sc Input}: A graph $G$ and an \textsc{mcs} ordering $\sigma$ of $G$ started with $z$.
  \\
  {\sc Output}: Whether $z$ can be an \textsc{mcs} end vertex of $G$.
  \begin{tabbing}
    AAA\=AAa\=AAA\=AAA\=MMMMMMMMMMMMMAAAAAAAAAAAAAAAAAAAAAAAAA\=A \kill
    1.\> {\bf for} $i \leftarrow 1$ to $n$ {\bf do}
    \\
    1.1.\>\> $D\leftarrow $ the set of unvisited vertices with the maximum number of visited neighbors;
    \\
    1.2.\>\> visit the vertex $\argmax_{v \in D}\sigma(v)$;
    \\
    2.\> {\bf if} the last visited vertex is $z$ {\bf then return} ``yes'';
    \\
    \> {\bf else return} ``no.''
  \end{tabbing}
    \end{minipage}
  };
  \caption{Algorithm for deciding whether a vertex $z$ is an \textsc{mcs} end vertex of a chordal graph.}
  \label{fig:alg-mcs-chordal}
\end{figure}

\begin{proof}[Proof of Theorem~\ref{thm:mcs-chordal}]
  Let $G$ be a connected chordal graph.  We find an \textsc{mcs} ordering $\sigma$ of $G$ started with $z$, and then use the algorithm descibed in Figure~\ref{fig:alg-mcs-chordal}.  We first show its correctness: Vertex $z$ is an \textsc{mcs} end-vertex of $G$ if and only if $z$ is the last visited vertex.  The if direction is correct because the algorithm conducts \textsc{mcs}, and Hence we focus on the only if direction.
  Let $S_1$, $\ldots$, $S_k$ be the set of separators specified in Theorem~\ref{thm:mcs-characterization}, and let $\sigma^+$ denote the ordering returned by the algorithm in Figure~\ref{fig:alg-mcs-chordal}.  We show by induction that for each $1 \le i\le k$, vertices in all the other components of $G - S_i$ are visited before those in the same component with $z$.

  Let $T'_1$ be the component of $G - S_1$ containing $z$.  By Proposition~\ref{cor:critical-edge} and Corollary~\ref{cor:weighted-clique-tree}, vertices in $T_1$ are at the beginning of $\sigma$.  In each component of $G - S_1$, there is a vertex adjacent to all vertices in $S_1$.  When the first vertex in $T'_1$ is being visited, it has precisely $|S_1|$ visited neighbors, i.e., $S_1$.  By the selection of vertices in step~1, all other components have been finished.  Thus, $T'_1$ is the last visited component of $G - S_1$.

  For the inductive step, suppose that the induction hypothesis is true for all $p$ with $1\le i\le p< k$, we show it is also true for $p + 1$.  For $1< i \le k$, let  $T'_i$ be the component of $T_{i-1} - S_i$ containing $z$, and let $T_i$ be the subgraph induced by $V(T'_i)\cup S_i$.  Let $v\in T_{p+1}$ be the vertex satisfying $v<_{\sigma^+} u$ for all $u\in T_{p+1}\setminus \{v\}$.  Then $S_{p+1}\subseteq N(v)$ and $x<_{\sigma^+} v$ for all $x\in S_{p+1}$.  Since $S_{p+1}$ is a minimum separator of $T_p$, any other component of $T_p - S_{p + 1}$ has a vertex adjacent to all of $S_{p+1}$.  Such a vertex $x$ would satisfy $v <_\sigma x$ because of  Proposition~\ref{cor:critical-edge} and Corollary~\ref{cor:weighted-clique-tree}, and then be chosen by step~1 before $v$.  Now that all the vertices in $G - N[z]$ and the non-simplicial vertices in $N[z]$ have been visited, the only remaining vertices are true twins of $z$.  Since $\sigma(z) = 1$, it has to be the last visited.  This concludes the proof of the correctness.

  We now analyze the running time.  The only difference between the algorithm and the original \textsc{mcs} algorithm is step~1.2.  We need to compare the $\sigma$-numbers of vertices in $D$.  It needs to be done $n$ times, and each time takes $O(n)$ time, and hence the extra time is $O(n^2)$.  Together with the time for \textsc{mcs} itself, the total running time is $O(n^2 + m)= O(n^2)$.
\end{proof}

This algorithm can be called the \textsc{mcs$^+$} algorithm.  Unlike \textsc{lbfs$^+$}~\cite{corneil-04-survey-lbfs}, however, it is not immediately clear how to carry \textsc{mcs$^+$} out in linear time.

\section{Maximum cardinality search on weakly chordal graphs}
A graph $G$ is \emph{weakly chordal} if neither $G$ nor its complement contains an induced cycle on five or more vertices.  Since the complement of each induced cycle on six or more vertices contains an induced cycle on four vertices, all chordal graphs are weakly chordal.
To prove the NP-completeness of the \textsc{mcs} end vertex problem on weakly chordal graphs, we use a reduction from the 3-satisfiability problem (3-\textsc{sat}), in which each clause comprises precisely three literals.

Given an instance $\mathcal{I} $ of 3-\textsc{sat} with $p$ variables and $q$ literals, we construct a graph $G$ as follows (see Figure~\ref{fig:weaklychordal} for an example).  Let the variables and clauses of  $\mathcal{I}$ be denoted by $x_1$, $x_2$, $\ldots$, $x_p$ and $c_1$, $c_2$, $\ldots$, $c_q$, respectively.
For each literal, (including those that do not occur in any clause,) we introduce a vertex; let $L$ denote this set of $2 p$ literal vertices.  For each literal vertex, we add edges between it and other vertices in $L$, with the only exception of its negation.  We also introduce a set $C$ of $q$ clause vertices, each for a different clause; they forms an independent set.  For each $\ell\in L$ and $c\in C$, we add an edge $\ell c$ if the literal $\ell$ does not occur in the clause $c$.  Therefore, each clause vertex has $2p - 3$ neighbors in $L$.  Finally, we add seven extra vertices $a_1, a_2, u_1, u_2, b, y, z$ and edges $a_1 a_2$, $u_1 u_2$, $y z$, $\{b, z\}\times L$ and $\{a_2, u_1, u_2, y\}\times (L\cup C)$.

\begin{figure}[h]
  \centering      \footnotesize
    \begin{tikzpicture}[every node/.style={vertex}, scale=1.5]

    \draw[rounded corners=3mm] (-0.5,-4) rectangle (5.3,0.1);
    \filldraw[fill=gray!15,rounded corners=3mm] (-0.3,-1.5) rectangle (5.7,-0.1);

    \node[label=right:$a_2$] (S) at (2.4,0.8){};
    \node[label=left:$a_1$] (SS) at (1.7,0.8){};
    \node[label=left:$u_1$] (U1) at (3.7,0.8){};
    \node[label=right:$u_2$] (U2) at (4.4,0.8){};
    \node[label=left:$x_{1}$] (X1) at (0.3,-0.4) {};
    \node[label=left:$x_{2}$] (X2) at (1.7,-0.4) {};
    \node[label=right:$x_{3}$] (X3) at (3.1,-0.4) {};
    \node[label=right:$x_{4}$] (X4) at (4.5,-0.4) {};
    \node[label=left:$\overline {x_{1}}$] (X11) at (0.3,-1.2) {};
    \node[label=left:$\overline {x_{2}}$] (X22) at (1.7,-1.2) {};
    \node[label=right:$\overline {x_{3}}$] (X33) at (3.1,-1.2) {};
    \node[label=right:$\overline {x_{4}}$] (X44) at (4.5,-1.2) {};
    \node[label=right:$b$] (M) at (6.5,-0.45){};
    \node[label=right:$z$] (Z) at (6.5,-1.1){};
    \node[label=right:$y$] (Y) at (6.5,-2.8){};
    \node[label=below:$\overline {x_1} \vee \overline x_2 \vee \overline {x_3}$] (XX1) at (0.3,-3){};
    \node[label=below:${x_1} \vee \overline {x_2} \vee x_4$] (XX2) at (2.25,-3){};
    \node[label=below:$\overline {x_2} \vee \overline {x_3} \vee \overline {x_4}$] (XX3) at (4.5,-3){};
    
    \begin{scope}[every path/.style={ultra thin}]
    \draw[line width=0.6,dashed]  
    (X1)--(X11) (X2)--(X22) (X3)--(X33) (X4)--(X44)
    (XX1)--(X11) (XX1)--(X22) (XX1)--(X33)
    (XX2)--(X1) (XX2)--(X22) (XX2)--(X4)
    (XX3)--(X22) (XX3)--(X33) (XX3)--(X44)
    (XX1)--(XX2) (XX2)--(XX3);
    \draw[line width=0.5,dashed] (XX1)..controls(2.25,-3.4)..(XX3);
    \draw[line width=0.5]
    (S)--(SS) (S)--(2.25,0.1) (S)--(2.35,0.1)  (S)--(2.45,0.1) (S)--(2.55,0.1) 
    (U1)--(U2)
    (U1)--(3.55,0.1) (U1)--(3.65,0.1) (U1)--(3.75,0.1) (U1)--(3.85,0.1)
    (U2)--(4.25,0.1) (U2)--(4.35,0.1) (U2)--(4.45,0.1) (U2)--(4.55,0.1)
    (M)--(5.7,-0.3) (M)--(5.7,-0.4) (M)--(5.7,-0.5) (M)--(5.7,-0.6)
    (Z)--(5.7,-0.95) (Z)--(5.7,-1.05) (Z)--(5.7,-1.15) (Z)--(5.7,-1.25)
    (Y)--(Z) (Y)--(5.3,-2.65) (Y)--(5.3,-2.75) (Y)--(5.3,-2.85) (Y)--(5.3,-2.95);

  \end{scope}
  \end{tikzpicture} 

  \caption{Construction for NP-completeness proof of the \textsc{mcs} end vertex problem on weakly chordal graphs.   The \textsc{3-sat} instance has four variables and three clauses, $(\overline {x_1} \vee \overline x_2 \vee \overline {x_3})$, $({x_1} \vee \overline {x_2} \vee x_4)$, $(\overline {x_2} \vee \overline {x_3} \vee \overline {x_4})$, i.e., $p = 4$ and $q = 3$.  The $2p$  literal vertices are shown in the small gray box, and the $q$ clause vertices are in the big box.   In the boxes, two vertices are nonadjacent if there is a dashed line between them, and adjacent otherwise.  Vertices $b$ and $z$ are adjacent to all literal vertices, while vertices $a_2, u_1, u_2$, and $y$ are adjacent to all literal vertices and all clause vertices.
The \textsc{mcs} ordering $\langle a_1, a_2, x_1, \overline{x_2}, x_3, x_4, b, \overline{x_1}, x_2, \overline{x_3}, \overline{x_4}, u_1, u_2, y, c_1, c_2, c_3, z\rangle$ of $G$ corresponds to the satisfying assignment in which all variables but $x_2$ are set to be true.
  }
\label{fig:weaklychordal}
\end{figure}

\begin{proposition}\label{prop:weaklychordalgraph}
  The graph $G$ constructed above is a weakly chordal graph.
\end{proposition}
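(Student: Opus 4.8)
The plan is to use the standard characterization that a graph is weakly chordal if and only if it contains neither an induced cycle $C_\ell$ (a \emph{hole}) nor an induced $\overline{C_\ell}$ (an \emph{antihole}) for any $\ell \ge 5$, and to verify these two conditions separately for $G$. Before starting I would record the adjacency pattern of the seven special vertices: $a_1$ is a pendant neighbour of $a_2$; each of $a_2, u_1, u_2, y$ is adjacent to all of $L \cup C$; $b$ and $z$ are adjacent to all of $L$ but to no vertex of $C$; within $L \cup C$ the set $C$ is independent, $L$ is complete minus the perfect matching pairing each literal with its negation, and $\ell \sim c$ exactly when $\ell \notin c$. The only edges among the special vertices are $a_1a_2$, $u_1u_2$, and $yz$. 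It is worth noting up front \emph{why} one cannot simply peel the graph apart using the closure of weak chordality under deleting pendant, universal, or twin vertices: the four near-universal vertices $a_2, u_1, u_2, y$ are pairwise non-adjacent, so none of them is universal once the others are present, and the analysis must treat them together.

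For holes, I would suppose an induced cycle $O$ with $|O| = \ell \ge 5$ and derive a contradiction. Since $a_1$ has degree one it lies on no cycle. The key leverage is that any vertex adjacent to all of $L \cup C$ has exactly two neighbours on $O$, so if such a vertex lies on $O$ then $|O \cap (L\cup C)| \le 2$; the analogous bound with $L$ in place of $L\cup C$ holds for $b$ and $z$. I would then split on which special vertices meet $O$. If $O$ meets $\{a_2,u_1,u_2,y\}$, then $O$ has at most two vertices in $L\cup C$ and at least three among $\{a_2,u_1,u_2,b,z,y\}$; since these six induce only the two disjoint edges $u_1u_2$ and $yz$, and every connection between special vertices must pass through one of the at most two available vertices of $L\cup C$, no cycle of length $\ge 5$ can close up — a short computation using universality to $L\cup C$ (a required non-edge from such a vertex to a cycle vertex of $L\cup C$ is forbidden) eliminates each configuration. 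If $O$ avoids $\{a_2,u_1,u_2,y\}$ but meets $\{b,z\}$, then $O \subseteq \{b,z\}\cup L \cup C$ with at most two vertices in $L$, while $\{b,z\}\cup C$ is independent, again precluding a long cycle. The remaining case is $O \subseteq L \cup C$.

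The combinatorial heart is this last case: I would show $G[L\cup C]$ has no induced hole of length $\ge 5$. Here the decisive fact is that in $L$ every literal has a \emph{unique} non-neighbour, namely its negation, so on an induced cycle any literal vertex has at most one literal non-neighbour; combined with the independence of $C$ (which forces $|O\cap L| \ge |O\cap C|$ and forbids consecutive clause vertices) and with $\ell\sim c \iff \ell\notin c$, a short case count on $|O\cap C| \in \{0,1,2,\dots\}$ collapses: in each arrangement two distinct literals are forced to equal the negation of a common literal, which is impossible. I expect this uniqueness-of-non-neighbour argument to be the cleanest way to kill the core case, and the bookkeeping of the mixed cases above to be the most tedious part.

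Finally, for antiholes I would use that $\overline{C_5}\cong C_5$, so no induced $\overline{C_5}$ can exist once no induced $C_5$ does, leaving only $\ell \ge 6$. For these I would pass to $\overline G$, where $C$ becomes a clique, $L$ becomes a perfect matching, $a_1$ is adjacent to everything except $a_2$, and $\ell \sim c \iff \ell \in c$. An induced $\overline{C_\ell}$ with $\ell\ge 6$ is a hole of $\overline G$; the near-universal vertex $a_1$ cannot lie on it, at most two clique vertices of $C$ can lie on it, and the now-sparse matching structure of $L$ in $\overline G$ plays the role that near-universality played before, so an analogous case analysis applies. I expect the antihole direction, rather than the hole direction, to be the main obstacle, since the complement is denser and the interplay of the clique $C$ with the matching $L$ and the special vertices requires the most careful enumeration; everything else reduces to the two clean structural facts (unique non-neighbour in $L$, and the independence/clique duality of $C$) together with the degree bound forced by the vertices dominating $L\cup C$.
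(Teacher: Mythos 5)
Your plan is sound and rests on the same two structural facts the paper exploits---each literal vertex has a unique non-neighbour in $L$ while $C$ is independent (a clique in $\overline G$), and the near-universal vertices can contribute only a bounded amount to any long induced cycle---but you organize the argument differently. You run a direct case analysis on which special vertices a hypothetical hole meets, whereas the paper peels the seven special vertices off one at a time, in both $G$ and $\overline G$, using four reduction rules: degree one, simpliciality, true/false twins, and, crucially, the observation that a vertex on an induced $C_\ell$ with $\ell\ge 5$ must have two non-neighbours on the cycle \emph{with an edge between them}. Your remark that one ``cannot simply peel'' is therefore only half right: pendant/universal/twin rules alone do not suffice, but that last rule disposes of $y$, $a_2$, and $z$ (whose remaining non-neighbours form an independent set), and twin arguments handle $u_1,u_2,b$; this reduces both $G$ and $\overline G$ to $G[L\cup C]$ and $\overline G[L\cup C]$ with almost no case work. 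Once there, your core arguments coincide with the paper's: in $G[L\cup C]$ a clause vertex on the hole forces its two cycle-neighbours to be $x,\overline x$, which are jointly adjacent to all of $L$, pushing the remaining vertices into the independent set $C$; in $\overline G[L\cup C]$ the clique $C$ contributes at most two vertices and the matching on $L$ cannot supply the rest. The trade-off is that your route needs genuine enumeration in the mixed cases (especially for antiholes, where $b,z$ become adjacent to $C$ and to most specials while $a_2,u_1,u_2,y$ drop to degree five, so ``analogous'' is optimistic and that part of your write-up is the least developed), while the paper's reduction buys uniformity at the cost of verifying a handful of twin/simpliciality claims. I see no step of yours that actually fails, but if you carry your version through, the antihole case analysis is where the real work remains.
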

\begin{proof}
  We need to show that neither $G$ nor $\overline {G}$ contains an induced cycle on five or more vertices.
  We proceed as follows: We identify a vertex $v\in V(G)$ such that $G$ contains an induced cycle on five or more vertices if and only if $G-v$ contains an induced cycle on five or more vertices, and then consider $G-v$.
  The following properties are straightforward:

  \begin{enumerate}[(i)]
  \item A vertex on any induced cycle on five or more vertices has degree at least two.
  \item A simplicial vertex is not on any induced cycle on five or more vertices.
  \item An induced cycle on five or more vertices cannot contain a pair of true twins or false twins, and when it contains one of them, this vertex can be replaced by the other.
  \item If a vertex is on an induced cycle on five or more vertices, then it has at least two non-neighbors, and there is at least one edge among these non-neighbors.
  \end{enumerate}

  We can reduce $G$ to $G - \{a_1\}$ because $d(a_1) = 1$ and (i); then to $G - \{a_1, u_2\}$ because $u_1$ and $u_2$ are true twins and (iii); to $G - \{a_1, u_1, u_2\}$ because $u_1$ and $a_2$ are false twins in $G - \{a_1, u_2\}$ and (iii); to $G - \{a_1, u_1, u_2, y\}$ because the only two remaining non-neighbors of $y$, namely,  $a_2$ and $b$, are not adjacent to each other and (iv); to $G - \{a_1, u_1, u_2, y, a_2\}$ for the same reason; to $G - \{a_1, u_1, u_2, y, a_2, b\}$ because $z$ and $b$ are false twins in $G - \{a_1, u_1, u_2, y, a_2\}$ and (iii); and finally to $G - \{a_1, u_1, u_2, y, a_2, b, z\}$ because the only non-neighbors of $z$, namely, $C$, are independent and (iv).
  The remaining graph is $G[L\cup C]$.  Suppose that there is an induced cycle $H$ on five or more vertices.  It must intersect both $L$ and $C$, since each vertex in $L$ has only one non-neighbor in it, and since $C$ is independent.  Let $v\in C$ be a vertex on this cycle.  Its two neighbors on $H$ have to be from $L$; and since they are nonadjacent to each other, they have to be $x$ and $\bar x$ for some variable $x$.  Since both $x$ and $\bar x$ are adjacent to all other vertices in $L$, the other $\ge 2$ vertices on $H$ have to be from $C$.  But this is impossible because $C$ is independent.

  Now we consider $\overline {G}$.
  It can be reduced to $\overline G - \{a_1\}$ because $a_1$ has only one non-neighbor and (iv); then to $\overline G - \{a_1, u_2\}$ because $u_1$ and $u_2$ are false twins and (iii); to $\overline G - \{a_1, u_1, u_2\}$ because $u_1$ and $a_2$ are true twins in $\overline G - \{a_1, u_2\}$ and (iii); to $\overline G - \{a_1, u_1, u_2, y\}$ because $y$ is simplicial in $\overline G - \{a_1, u_1, u_2\}$ and (ii); to $\overline G - \{a_1, u_1, u_2, y, b\}$ because $z$ and $b$ are true twins in $\overline G - \{a_1, u_1, u_2, y\}$ and (iii); to $\overline G - \{a_1, u_1, u_2, y, b, a_2\}$ because the degree of $a_2$ is one in $\overline G - \{a_1, u_1, u_2, y, b\}$ and (i); and finally to $\overline G - \{a_1, u_1, u_2, y, a_2, b, z\}$ because $z$ is simplicial in  $\overline G - \{a_1, u_1, u_2, y, b, a_2\}$ and (ii).
  The remaining graph is $\overline G[L\cup C]$.  Suppose that there is an induced cycle $H$ on five or more vertices.  Since $C$ is a clique, $H$ contains at most two vertices from $C$.  In other words, at least three vertices on $H$ are from $L$, but this is impossible because each vertex in $L$ has only one neighbor in $L$.

We can thus conclude that $G$ is a weakly chordal graph.
\end{proof}

We are now ready to prove Theorem~\ref{thm:mcs-weakly-chordal}.
\begin{proof}[Proof of Theorem~\ref{thm:mcs-weakly-chordal}]
  It is clear that the \textsc{mcs} end vertex problem is in NP, and we now show that it is NP-hard.
Let $\mathcal I$ be an instance of 3-\textsc{sat}, and let $G$ be the graph constructed from $\mathcal I$.  We show that $z$ is an \textsc{mcs} end-vertex of $G$ if and only if $\mathcal I$ has a satisfying assignment.
  
For the if direction, suppose that $\cal I$ is satisfiable, and we give an \textsc{mcs} ordering $\sigma$ as follows.  Let us fix a satisfying assignment of $\mathcal I$, and let $T$ be the set of variables that are set to be true.   The starting vertex is $a_1$, which is followed by $a_2$; visited after them are  $\{x\mid x\in T\} \cup \{\bar x\mid x\not\in T\}$, (i.e., the literal vertices corresponding to true literals,) in any order.  After these $p + 2$ vertices, each of $y, z, u_1, u_2$, $b$, and each of the unvisited literal vertices has $p$ visited neighbors.  On the other hand, each clause vertex has at most $p$ visited neighbors: Each clause contains a true literal, and hence each clause vertex has at least one non-neighbor in the visited literal vertices.

Then $\sigma(b) = (p + 3)$.  Since $b$ is adjacent to only literal vertices, the next vertex is one of them.   On the other hand, since vertices $L\setminus T$ form a clique, they have to be visited between $p + 4$ and $2p + 3$, i.e., before others.

The remaining vertices are $u_1$, $u_2$, $y$, $z$, and clause vertices.  Each of $u_1$, $u_2$, $y$, and $z$ has $2 p$ visited neighbors, while each clause vertex has only $2 p - 2$, because each clause is nonadjacent to three literal vertices.  Let $u_1$, $u_2$, and $y$ be visited next.  After that, all the remaining vertices ($z$ and all clause vertices) have the same number of visited neighbors, $2 p + 1$.  There is no edge among these vertices, so they an be visited in any order.  We have thus obtained an \textsc{mcs} ordering of $G$ ended with $z$.

We now prove the only if direction.  Suppose that $\sigma$ is an \textsc{mcs} ordering of $G$ with $\sigma(z) = n$.
Since $N(z) = N(b)\cup\{y\}$, visiting $y$ before $b$ would force $z$ to be visited before $b$; therefore, $b <_\sigma y <_\sigma z$.  Likewise, $N(b) = L \subset L \cup C \subset N(y)$ and $b <_\sigma y$ demand
\begin{equation}
  \label{eq:1}\tag{$\star$}
  b <_\sigma c \text{ for all } c \in C.
\end{equation}
Since $d(a_1) = 1$, it is easy to verify that $\{\sigma(a_1), \sigma(a_2)\} = \{1, 2\}$; otherwise, $\sigma$ must end with $a_1$.  The third vertex of $\sigma$ has to be from $N(a_2)$, i.e., $L\cup C$.  It cannot be from $C$ because of \eqref{eq:1}.  Therefore, $X = \{\ell\mid 3\le \sigma(\ell)\le p + 2\}\subset L$: (1) For each variable, one literal vertex has more visited neighbors than $b$, $z$, $y$, $u_1$, $u_2$; (2) clause vertices cannot be visited before $b$.  There cannot be any variable $x$ such that both $x,\bar x\in X$, because $x \bar x\not\in E(G)$.  We claim that assigning a variable $x$ to be true if and only if $x\in X$ is a satisfying assignment for $\cal I$.
Suppose for contradiction that some clause $c$ is not satisfied by this assignment.
By the construction of $G$, the clause vertex $c$ is adjacent to all vertices of $X$.  After visiting the first $p + 2$ vertices, $c$ has $p + 1$ visited neighbors, ($\{a_2\}\cup X$,) while any other unvisited vertex in $V(G)\setminus C$ has at most $p$ visited neighbors.  But then $\sigma(c) = k + 3$, contradicting \eqref{eq:1}.  Therefore, all clauses are satisfied, and this completes the proof.
\end{proof}

\section{Lexicographic depth-first search on chordal graphs}
Berry et al.~\cite[Characterization 8.1]{berry-10-extremities-search} have given a full characterization of \textsc{mns} end vertices on chordal graphs: A vertex $z$ is an \textsc{mns} end vertex if and only if it is simplicial and the minimal separators of $G$ in $N(z)$ are totally ordered by inclusion.  Since \textsc{ldfs} is a special case of \textsc{mns}, its end vertices also have this property.  We show that this condition is also sufficient for a vertex to be an \textsc{ldfs} end vertex.

Similar as \textsc{dfs}, \textsc{ldfs} visits a neighbor of the most recent vertex, or backtracks if all its neighbors have been visited.  The difference lies on the choice when the vertex has more than one unvisited neighbors.  Each unvisited vertex has a label, which is all its visited neighbors.  When there are ties, it chooses a vertex with the lexicographically largest label.  The following is actually a simple property of \textsc{dfs}. 
\begin{proposition}
  \label{lem:ldfs-component}
  Let $X\subseteq V(G)$ such that $G[X]$ is connected.  If an \textsc{ldfs} visits all vertices in $N(X)$ before the first vertex in $X$, then it visits vertices in $X$ consecutively.
\end{proposition}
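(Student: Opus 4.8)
The plan is to observe that the claim does not depend on the lexicographic tie-breaking at all: it holds for \emph{every} depth-first search, so I would prove it using only the generic stack discipline of \textsc{dfs}. Write $N(X)$ for the set of vertices outside $X$ having a neighbor in $X$; by definition every edge leaving $X$ lands in $N(X)$, so $N(X)$ separates $X$ from $V(G)\setminus(X\cup N(X))$ in the strong sense that no vertex of the latter set has any neighbor in $X$. This separation observation is the only structural input beyond the search mechanics.

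Let $v_0$ be the first vertex of $X$ visited by the search. By hypothesis, at the moment $v_0$ is discovered every vertex of $N(X)$ is already visited, while no vertex of $X$ is. The key step is to describe the set of vertices that become descendants of $v_0$ in the \textsc{dfs} tree. I would invoke the standard \textsc{dfs} property (the ``white-path'' characterization): the descendants of $v_0$, including $v_0$ itself, are exactly the vertices reachable from $v_0$ through a path whose internal vertices are all unvisited at the time $v_0$ is discovered, and, crucially, these descendants are discovered during a single contiguous time interval --- that is, consecutively.

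It then remains to identify this reachable set with $X$. Since $G[X]$ is connected and $v_0\in X$, every vertex of $X$ is reachable from $v_0$ through vertices of $X$, all of which are unvisited when $v_0$ is discovered; hence $X$ is contained in the descendant set. Conversely, any still-unvisited vertex outside $X$ lies in $V(G)\setminus(X\cup N(X))$ (as $N(X)$ is already visited), and by the separation observation such a vertex has no neighbor in $X$, so no path of unvisited vertices from $v_0$ can leave $X$. Therefore the descendant set of $v_0$ equals $X$, and by the interval property these $|X|$ vertices are visited consecutively.

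The only real content is the \textsc{dfs} interval property invoked above, and I expect pinning that down to be the main obstacle --- most cleanly via the stack itself. Concretely, I would maintain the invariant that $v_0$ stays on the stack until all of $X$ has been exhausted: a vertex of $X$ sitting on the stack has all of its non-$X$ neighbors (which lie in $N(X)$) already visited, so it can only extend the search within $X$, while the connectivity of $G[X]$ guarantees that, as long as $X$ still contains an unvisited vertex, some stack vertex of $X$ has an unvisited neighbor and the search cannot backtrack past $v_0$. Once this invariant is established the conclusion is immediate.
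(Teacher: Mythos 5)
Your proof is correct. The paper gives no proof of this proposition at all---it is introduced with the remark that it ``is actually a simple property of \textsc{dfs}'' and left unproved---and your argument supplies exactly the standard justification behind that remark: the white-path characterization of descendants together with the parenthesis (nesting) property of \textsc{dfs}, both of which hold for any tie-breaking rule and hence for \textsc{ldfs}, and your identification of the descendant set of the first visited vertex of $X$ with $X$ itself (using that $N(X)$ is already visited and that every path leaving $X$ must pass through $N(X)$) is carried out correctly.
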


\begin{lemma}\label{lem:ldfs-chordal}
 A vertex $z$ of a chordal graph $G$ is an \textsc{ldfs} end vertex if and only if it is simplicial and the minimal separators of $G$ in $N(z)$ are totally ordered by inclusion.
\end{lemma}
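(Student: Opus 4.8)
The plan is to prove Lemma~\ref{lem:ldfs-chordal} by establishing both directions. The forward direction---that an \textsc{ldfs} end vertex $z$ must be simplicial with the minimal separators in $N(z)$ totally ordered by inclusion---follows immediately from the fact that \textsc{ldfs} is a special case of \textsc{mns} together with the characterization of Berry et al.~\cite{berry-10-extremities-search} quoted above. The substance of the lemma is the converse: given a simplicial $z$ whose minimal separators in $N(z)$ form a chain $S_1 \subsetneq S_2 \subsetneq \cdots \subsetneq S_r$ under inclusion, I must construct an \textsc{ldfs} ordering of $G$ that ends at $z$.

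First I would analyze the structure imposed by the totally ordered chain of separators. The largest separator in the chain should be the set of all non-simplicial vertices of $N(z)$ (since $z$ is simplicial, $N(z)$ is a clique and every minimal separator sitting inside it is a clique). The idea is to peel $G$ apart along this chain: the separator $S_1$ splits off the component structure farthest from $z$, and each successive $S_i$ lives ``closer'' to $z$, exactly mirroring the recursive separator picture used for \textsc{mcs} in Theorem~\ref{thm:mcs-characterization}. Let $T_i$ denote the component of $G - S_i$ containing $z$ (together with $S_i$), so that $N[z] \subseteq T_r \subseteq T_{r-1} \subseteq \cdots \subseteq T_1 \subseteq G$. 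The plan is to build the \textsc{ldfs} ordering by visiting, in order, everything outside $T_1$, then $T_1 \setminus T_2$, then $T_2 \setminus T_3$, and so on, finishing inside $T_r$ with $z$ last.

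The engine for the construction is Proposition~\ref{lem:ldfs-component}: if an \textsc{ldfs} has already visited all of $N(X)$ before entering a connected set $X$, then it visits $X$ consecutively and, crucially, cannot leave $X$ until $X$ is exhausted. I would apply this with $X$ being the $z$-side component at each peeling step, using the separator $S_i$ as the neighborhood that gets visited first. The key claim to verify at each step is that \textsc{ldfs}, having just finished the non-$z$ side, can be forced by the lexicographic tie-breaking rule to enter the $z$-component only through the separator $S_i$ and only after all of $S_i$ has been seen---this is where the chain condition does its work, since the nesting $S_i \subseteq S_{i+1}$ guarantees that once we are inside $T_i$ the separator vertices of $S_{i+1}$ are available to be explored before crossing into $T_{i+1}$. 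One must check that the lexicographic labels can actually be engineered (by choosing the order within each layer) so that at the moment of crossing, the vertex on the far side of $S_i$ from $z$ is never preferred once we want to commit to the $z$-component, and that within the final block $T_r = N[z]$ the simpliciality of $z$ lets us defer $z$ to the very end.

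\textbf{The main obstacle} I anticipate is the careful bookkeeping of the lexicographic tie-breaks at each separator crossing: unlike \textsc{mcs}, which is oblivious and only counts visited neighbors, \textsc{ldfs} records the entire set of visited neighbors as a label and compares them lexicographically, so I must argue that there is a consistent global ordering of the vertices within the already-visited region making the label of the desired ``entry'' vertex into $T_i$ lexicographically largest at precisely the right moment, while also ensuring the depth-first backtracking does not prematurely escape a component. Controlling these labels across all $r$ layers simultaneously---rather than one in isolation---is the delicate part, and I would handle it by an induction on $i$ that maintains the invariant that at the start of layer $i$ every vertex of $S_i$ has already been visited and no vertex of $T_i \setminus S_i$ has, which both feeds Proposition~\ref{lem:ldfs-component} and pins down the relevant labels.
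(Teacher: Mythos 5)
Your high-level plan is the same as the paper's---only-if from the \textsc{mns} characterization, if-direction by peeling $G$ along the chain $S_1\subsetneq\cdots\subsetneq S_r$ and invoking Proposition~\ref{lem:ldfs-component} on each split-off component---but the concrete ordering you write down does not work, and the step you defer as ``the main obstacle'' is precisely the part that has to be carried out. Since $S_i\subseteq N(z)\subseteq T_j$ for every $i$ and $j$, your block sequence ``everything outside $T_1$, then $T_1\setminus T_2$, \dots, finishing inside $T_r$'' puts \emph{every} separator vertex into the final block $T_r=N[z]$. This contradicts your own invariant (``at the start of layer $i$ every vertex of $S_i$ has already been visited''), and it is not realizable by any \textsc{dfs}: two components outside $T_1$ meet only through $S_1$, so the second cannot be entered before $S_1$ is; moreover Proposition~\ref{lem:ldfs-component} only controls a component $C$ if $N(C)$ is visited \emph{before} $C$, so starting the search inside a component gives you no handle on when the search escapes it. The correct interleaving---which is what the paper does---is to begin with $S_1$ itself, then visit the components $C$ with $N(C)=S_1$ one by one, then $S_2\setminus S_1$, then the components with $N(C)=S_2$, and so on, ending with the true twins of $z$ and finally $z$.

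Once the ordering is arranged this way, the lexicographic bookkeeping you flag as delicate becomes essentially trivial, and the observation that trivializes it is the one missing from your sketch: every $S_i$ lies inside the clique $N(z)$, and a component $C$ with $N(C)=S_i$ has no neighbors outside $S_i$. Hence at each decision point the label of any unvisited vertex is a subset of the already-visited portion of $N(z)$, with equality exactly for the unvisited vertices of $N[z]$ and for those vertices of the next component $C$ adjacent to all of $N(C)$. So choosing a vertex of $N(z)$ is always legal, and the entry vertex of $C$ becomes legal exactly when the visited part of $N(z)$ equals $N(C)$---no global engineering of labels across layers is required. You also rely on two structural facts without justification: that $S_r$ is the set of non-simplicial vertices of $N(z)$, and that every component of $G-S_i$ not containing $z$ has neighborhood equal to some $S_j$ (equivalently, is already a component of $G-S_r$). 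Both are true---the latter because $N(C)$ is a minimal \stsep{u}{z} contained in $N(z)$ for $u\in C$, hence one of the $S_j$---but each needs a line of proof before your induction can get off the ground.
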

\begin{proof}
  The only if direction follows from that all \textsc{ldfs} orderings are \textsc{mns} orderings \cite{corneil-08-graph-searching} and the result of Berry et al.~\cite{berry-10-extremities-search}.
  For the if direction, suppose that $S_1$, $\ldots$, $S_k$ are the minimal separators in $N(z)$ and $S_1 \subset \cdots \subset S_k$.  It is easy to see that for all $1\le i\le k$, each component of $G - S_i$ not containing $z$ is a component of $G - S_k$; let $\cal C$ denote these components.  We show an \textsc{ldfs} ordering $\sigma$ of $G$ as follows.  It starts from visiting all vertices in $S_1$, followed by components $C\in \cal C$ with $N(C) = S_1$, visited one by one.  In the same manner, it deals with $S_2. \ldots S_k$ in order.  After that the only unvisited vertex are $z$ and its true twins, of which it chooses $z$ the last.  We now verify that this is indeed a valid \textsc{ldfs} ordering.  It is clear for $S_1$.  Since vertices in each component $C\in \cal C$ are visited after $N(C)$, By Proposition~\ref{lem:ldfs-component}, it suffices to show the correctness when it visits a vertex in $N(z)$ and when it visits the first vertex of a new component $C\in \cal C$.  When such a decision is made, the label of an unvisited vertex is either $\emptyset$ or all visited vertices in $N(z)$, i.e., the most recently visited separator.  So it is always correct to select a vertex from $N(z)$.  When a vertex $v$ in a component $C$ is selected, the visited vertices in $N(z)$ are precisely $N(C)$, hence $v$ does have the largest label.
\end{proof}

\section{Breadth-first search on interval graphs}

Interval graphs are intersection graphs of intervals on the real line.  An interval graph is always chordal, and in particular, it has a clique tree that is a path \cite{fulkerson-65-interval-graphs}.
Corneil et al.~\cite{corneil-10-end-vertices-lbfs} gave a very simple linear-time algorithm for deciding whether a vertex $z$ is an \textsc{lbfs} end vertex of an interval graph, which is very similar to our algorithm in Figure~\ref{fig:alg-mcs-chordal}.  They conducted an \textsc{lbfs} started from $z$, and then another \textsc{lbfs} that uses the first run to break ties.  They proved that $z$ is an \textsc{lbfs} end vertex if and only if it is the last of the second run.
As shown in Figure~\ref{fig:bfs-bad-example}, however, this algorithm cannot be directly adapted to the \textsc{bfs} end vertex problem.

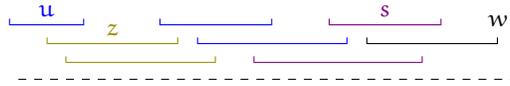
\begin{figure}[h]
  \centering      \footnotesize
    \begin{tikzpicture}[scale=1.]

      \begin{scope}[shift = {(0,  0)}, scale=.25]
      \draw[dashed,thin] (3.5,-0.9) -- (30,-0.9);
      \begin{scope}[every path/.style={{|[right]}-{|[left]}}]
        \draw[olive] (5, 1) to["$z$"] (12, 1);
      \draw[blue] (3, 2) to["$u$"] (7, 2);
      \draw[olive] (6, 0) to (14, 0);  
      \draw[blue] (13, 1)  to  (21,1); 
      \draw[blue] (11,2) to (17,2);
      \draw[violet] (20, 2) to ["$s$"] (26, 2) ;
      \draw[violet] (16, 0) node {}  to (25, 0);
      \draw  (22,1)  to (29,1) node["$w$"] {};
      \end{scope}
    \end{scope}
  \end{tikzpicture} 
  \caption{A \textsc{bfs} started from $z$ may end with $s$ or $w$, but a \textsc{bfs} started from $w$ has to end with $u$.  (Note that a \textsc{bfs} started from $s$ may end with $z$.)}
  \label{fig:bfs-bad-example}
\end{figure}

If a graph has one and only one universal vertex, then each of the other vertices is a BFS end-vertex, but not itself.  If it has two or more universal vertices, then every vertex can be a BFS end-vertex.  Therefore, we may focus on graphs with no universal vertex.  Such an interval graph has at least three maximal cliques. 

\begin{proposition}[\cite{corneil-09-lbfs-strucuture-and-interval-recognition}]
  \label{lem:interval-ends}
  Let $G$ be a connected interval graph, and let $K_1,\ldots,K_p$ be a clique path of $G$.  Let $u\in K_1$ and $w\in K_p$ be two simplicial vertices.
  \begin{enumerate}[(i)]
  \item Both $u$ and $w$ are \textsc{lbfs} end vertices.
  \item For any vertex $v\in V(G)$, one of $u$ and $w$ has the largest distance to $v$.
  \end{enumerate}
\end{proposition}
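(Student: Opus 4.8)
The plan is to work with a fixed interval representation of $G$, assigning each vertex $x$ the interval $[\lp{x},\rp{x}]$ with all endpoints distinct. Since $G$ has no universal vertex, its clique path $K_1,\ldots,K_p$ has $p\ge 3$ cliques. Because $u$ and $w$ are simplicial, each belongs to a single maximal clique, namely $K_1$ and $K_p$ respectively, so I may choose the representation so that $u$ is the leftmost interval (smallest endpoints) and $w$ the rightmost (largest endpoints). I would prove the two assertions separately, treating $u$ throughout and obtaining the statements for $w$ by the left--right symmetry of the representation.

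For assertion~(i), I would exhibit a single \textsc{lbfs} ordering that ends at $u$. The natural candidate sweeps the clique path from right to left: start the search at $w$ and, whenever ties arise, prefer the unvisited vertex whose interval reaches farthest to the right, equivalently the one whose first maximal clique has the largest index. I expect this rule to produce an ordering in which the first maximal clique of each vertex is nonincreasing along the path, so that the vertices of $K_1$ come last; since $u$ is simplicial it lies in $K_1$ only, and a final tie-break puts $u$ strictly last. The substance of the argument is to verify that this ordering satisfies the four-vertex (labeling) characterization of \textsc{lbfs}: when a vertex is selected, its set of already-visited neighbors is lexicographically largest. This reduces to the interval observation that among the unvisited vertices the reachable-from-the-right structure is totally ordered by the left endpoints, so the greedy rightmost choice never violates the label comparison.

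For assertion~(ii), I would use the distance formula for interval graphs. Fix $v$ and define the greedy right-reach $\rho_0=\rp{v}$ and $\rho_{t+1}=\max\{\rp{x}:\lp{x}\le \rho_t\}$, together with the symmetric left-reach $\lambda_t$; then $d(v,y)$ is the least $t$ for which $y$'s interval meets $[\lambda_t,\rho_t]$. The key monotonicity is that $\rho_t$ is nondecreasing in $t$, and that, because $w$ has the rightmost left endpoint, the threshold $t$ at which the right-reach covers $\lp{y}$ is at most the one at which it covers $\lp{w}$. Now every $y$ either meets $v$'s interval, or lies entirely to its left or right: in the first case $d(v,y)\le 1$, and in the other two cases the one-sided greedy reach gives $d(v,y)\le d(v,u)$ or $d(v,y)\le d(v,w)$. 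Hence $d(v,y)\le\max\{d(v,u),d(v,w)\}$ for all $y$, so whichever of $u,w$ is farther from $v$ realizes the maximum distance from $v$.

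The main obstacle is the rigorous validity check in~(i): one must show that the greedy rightmost tie-breaking never conflicts with the lexicographic label rule of \textsc{lbfs}, including at the moments when the sweep crosses a minimal separator $K_i\cap K_{i+1}$ and when it enters the block of private vertices of a clique. Establishing that the reachable-from-the-right relation on the unvisited vertices stays a total order throughout the search---so that a unique rightmost choice always exists and agrees with the largest label---is where the interval structure has to be used most carefully; assertion~(ii) is comparatively routine once the greedy reach formula and its monotonicity are in place.
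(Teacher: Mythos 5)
First, a point of reference: the paper does not prove this proposition at all---it is imported verbatim from Corneil, Olariu and Stewart \cite{corneil-09-lbfs-strucuture-and-interval-recognition}---so your argument has to stand on its own. Your part~(ii) does: the greedy reaches $\rho_t,\lambda_t$ are monotone, every interval lies weakly to the left of $w$'s and weakly to the right of $u$'s, and monotonicity then gives $\mathrm{dist}(v,y)\le\max\{\mathrm{dist}(v,u),\mathrm{dist}(v,w)\}$ for every $y$. (Your distance formula is off by one---if $y$ already meets $[\lambda_0,\rho_0]$ the least such $t$ is $0$ while the distance is $1$---but this does not affect the comparison.)

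Part~(i) has a genuine gap. You define the candidate ordering as ``\textsc{lbfs} from $w$, breaking ties in favour of the rightmost interval,'' and then say the substance is to verify that this ordering satisfies the \textsc{lbfs} label condition. But if ties are broken only among vertices whose labels are already lexicographically maximal, the ordering is an \textsc{lbfs} by construction; what actually needs proof is that it \emph{ends at $u$}, i.e., that $u$ never carries the strictly largest label while some other vertex remains unvisited. The easy half is that as long as some $y\in K_1\setminus\{u\}$ is unvisited, $u$ is not forced, because $N(u)\subseteq K_1$ implies $u$'s label is contained in $y$'s. The hard half---showing the sweep cannot exhaust $K_1\setminus\{u\}$ while private vertices of later cliques are still pending with smaller labels---is exactly the step you defer, and it is the whole content of assertion~(i). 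Compounding this, your tie-break rule is stated inconsistently: ``reaches farthest to the right'' means largest $\rp{y}$, whereas ``first maximal clique has the largest index'' means largest $\lp{y}$, and these select different vertices in general; the inductive invariant you would need depends on which one you pick. Finally, the reduction to $p\ge 3$ via ``no universal vertex'' is not licensed by the hypotheses of the proposition (that assumption is made elsewhere in the paper for a different purpose); the cases $p\le 2$ are trivial but must be dispatched rather than silently assumed away.
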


It is known that a vertex $z$ of an interval graph $G$ can be an \textsc{lbfs} end vertex if and only if it is simplicial and $N[z]$ can be one of the two ends of a clique path of $G$ \cite{corneil-09-lbfs-strucuture-and-interval-recognition}.  However, a \textsc{bfs} may satisfy neither of the two conditions.  In Figure~\ref{fig:bfs-bad-example}, for example, vertex $z$ is not simplicial but can be a \textsc{bfs} end vertex.  When $z$ is not in an end clique, it should be close to one.  Actually, it should be at distance at most two to one of the $u$ and $w$ as specified in Proposition~\ref{lem:interval-ends}.  However, a \textsc{bfs} end vertex might be at distance two to both $u$ and $w$, as shown in Figure~\ref{fig:bfs-example}.

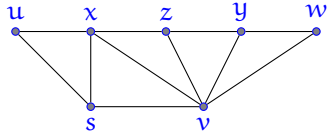
\begin{figure}[h]
  \centering\small
  \begin{tikzpicture}[every node/.style={vertex}]
    \foreach[count = \i from 0] \v in {u, x, z, y, w} 
    \node["$\v$"] (\v) at (\i, 1.) {};
    \node["$s$" below] (s) at (1,0) {};
    \node["$v$" below] (v) at (2.5,0) {};
    \draw (s) -- (u) -- (x) -- (z) -- (y) -- (w) -- (v) -- (y);
    \draw (z) -- (v) -- (x) -- (s) -- (v);
  \end{tikzpicture}
    \caption{$s, v, u, x, w, y, z$ is a \textsc{bfs} ordering ended with $z$.}
  \label{fig:bfs-example}
\end{figure}

For a fixed clique path $K_1,\ldots,K_p$ of an interval graph $G$, we let $\lp{v}$ and $\rp{v}$ denote, respectively, the smallest  and the largest number $i$ such that $v\in K_i$.\footnote{One may note that $\{v: [\lp{v}, \rp{v}]\}$ gives an interval representation for $G$.}  We use $\mathrm{dist}(u, v)$ to denote the distance between $u$ and $v$.

\begin{figure}[h!]
  \tikz\path (0,0) node[draw=gray!50, text width=.9\textwidth, rectangle, rounded corners, inner xsep=20pt, inner ysep=10pt]{
    \begin{minipage}[t!]{\textwidth} \small
      {\sc Input}: A connected interval graph $G$, a clique path $K_1, \ldots, K_p$ of $G$,
      \\ \phantom{{\sc Input}: }simplicial vertices $u\in K_1$ and $w\in K_p$, and $z\in V(G)$.
  \\
  {\sc Output}: Whether there exists a \textsc{bfs} ordering $\sigma$ of $G$ with $\sigma(z) = n$ and $u <_\sigma w$.
  \begin{tabbing}
    AAA\=AAa\=AAA\=AAA\=MMMMMMMMMMMMMAAAAAAAAAAAAAAAAAAAAAAAAA\=A \kill
    1.\> {\bf if} $z = w$ {\bf then return} ``yes'';
    \\
    2.\> {\bf if} there exists a universal vertex in $V(G)\setminus \{z\}$ {\bf then return} ``yes''; 
    \\
    3.\> $X\leftarrow \{x\in V(G): \mathrm{dist}(x, z) = \mathrm{dist}(x, w) \ge \mathrm{dist}(x, u)\}$;
    \\
    4.\> {\bf if} $X = \emptyset$ {\bf then return} ``no''; 
    \\
    5.\> $s\leftarrow$ any vertex in $\argmin_{v\in X} \lp{v}$;
    \\
    6.\> {\bf if} $\rp{z} < \lp{s}$ {\bf then return} ``no'';
    \\
    7.\> {\bf if} $s = u$ {\bf then return} ``yes'';
    \\
    8.\> {\bf for each} vertex $v\in N(s)$ at distance $\mathrm{dist}(s, u) - 1$ to $u$ {\bf do}
    \\
    \>\> {\bf if} $\mathrm{dist}(v, z) > \mathrm{dist}(v, u)$ {\bf then return} ``yes'';
    \\
    9.\> {\bf return} ``no.'' 
  \end{tabbing}
    \end{minipage}
  };
\caption{Main procedure for \textsc{bfs} end vertex on interval graphs.}
\label{fig:alg-BFS-interval}
\end{figure}

\begin{lemma}\label{lemma:bfs-endvertex-interval}
  The \textsc{bfs} end vertex problem can be solved in $O(n+m)$ time on interval graphs.
\end{lemma}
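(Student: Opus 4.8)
The plan is to establish correctness of the procedure in Figure~\ref{fig:alg-BFS-interval} and then bound its running time. First I would dispose of the trivial cases and set up the reduction. We may assume $G$ has no universal vertex (those cases were settled just before the lemma), so the clique path $K_1,\ldots,K_p$ has $p\ge 3$; fix it together with the two extreme simplicial vertices $u\in K_1$ and $w\in K_p$ of Proposition~\ref{lem:interval-ends}. For any start vertex $s$, part~(ii) of that proposition gives $\mathrm{ecc}(s)=\max(\mathrm{dist}(s,u),\mathrm{dist}(s,w))$, and the last vertex of any \textsc{bfs} from $s$ must lie in the layer at distance $\mathrm{ecc}(s)$ from $s$. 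Since $u$ and $w$ play symmetric roles, I would only characterize the \textsc{bfs} orderings ending at $z$ with $u<_\sigma w$, and let the full algorithm call the procedure twice, exchanging the roles of $u$ and $w$ (and treating $z=u$ symmetrically to $z=w$); the answer is the disjunction. In any ordering with $u<_\sigma w$, the \textsc{bfs} layering forces $\mathrm{dist}(s,u)\le\mathrm{dist}(s,w)$, hence $\mathrm{ecc}(s)=\mathrm{dist}(s,w)$; and since $z$ is strictly last, $z$ occupies the last layer, i.e.\ $\mathrm{dist}(s,z)=\mathrm{dist}(s,w)\ge\mathrm{dist}(s,u)$. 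This is exactly the membership test $s\in X$, which justifies returning ``no'' when $X=\emptyset$ (step~4) and drives the whole construction; the cases $z=w$ (step~1) and a spare universal vertex (step~2) are immediate.

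Next I would justify the canonical choice of start and interpret steps~5--8. The key structural claim is an exchange argument: among all feasible starts (those in $X$ admitting a \textsc{bfs} ending at $z$ with $u<_\sigma w$), the leftmost one, $s=\argmin_{v\in X}\lp{v}$, is feasible whenever any is. The intuition, to be made precise from the monotonicity of $\lp{\cdot}$ and $\rp{\cdot}$ along the clique path, is that moving the start leftward keeps both $z$ and $w$ in the common last layer while only enlarging the ``right tail'' into which $z$ can be deferred. With $s$ fixed, steps~6--8 decide whether a \textsc{bfs} from $s$ can be tie-broken so that $z$ becomes strictly last within the last layer: if $z$ lies entirely to the left of $s$ (step~6, $\rp{z}<\lp{s}$), then $z$ sits on the near, $u$-side of $s$ and cannot be deferred past the far, $w$-side, so ``no'' is correct; if the leftmost feasible start is the extreme vertex $u$ itself (step~7), then a \textsc{bfs} from $u$ orders its layers monotonically along the path and the last-layer ties can always be broken to finish with $z$; otherwise (step~8) the exploration must be routed through a neighbor $v$ of $s$ one step toward $u$, and this succeeds exactly when some such $v$ satisfies $\mathrm{dist}(v,z)>\mathrm{dist}(v,u)$, which is what lets $z$'s branch be processed strictly after $w$'s.

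The technical heart, and what I expect to be the main obstacle, is proving that the three branches in steps~6--8 are jointly exhaustive and correct, i.e.\ a full sufficiency-and-necessity argument for the tie-break. For sufficiency I would, under each satisfied condition, exhibit an explicit tie-break rule and verify by induction on the \textsc{bfs} layers that every last-layer vertex other than $z$ is enqueued before $z$; for necessity I would show that when all conditions fail, in every \textsc{bfs} from every feasible start some last-layer vertex is forced ahead of $z$ in the queue. Both directions rest on a structural lemma describing, for interval graphs, how the intra-layer order of a \textsc{bfs} is governed by the left-to-right interval order together with which previous-layer vertex first reaches a given vertex; tracing these ``first-reacher'' chains back to $s$ should reduce the question of whether $z$ can be last to precisely the distance inequalities of steps~6--8. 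I would set up this reduction carefully using Proposition~\ref{lem:interval-ends} and induction on $p$, since the example of Figure~\ref{fig:bfs-example} shows that $z$ need not be simplicial or in an end clique, so the argument cannot merely appeal to the \textsc{lbfs} end-vertex characterization.

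Finally, for the running time I would observe that every quantity the procedure uses is computable in linear time: a clique path and the extreme simplicial vertices $u,w$ via standard interval-graph recognition, the distance arrays from $u$ and $w$ (and, in step~8, from the $O(\deg(s))$ relevant neighbors $v$) by ordinary \textsc{bfs}, and the values $\lp{\cdot},\rp{\cdot}$ from the clique path. The set $X$, the selection of $s$, and the loop of step~8 are then linear, and since the full algorithm invokes the procedure only a constant number of times, the total cost is $O(n+m)$.
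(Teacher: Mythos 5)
Your setup is sound: the reduction to two symmetric calls, the derivation of the start-vertex condition $\mathrm{dist}(s,z)=\mathrm{dist}(s,w)\ge\mathrm{dist}(s,u)$ from Proposition~\ref{lem:interval-ends}(ii), the justification of steps~1--4, and the linear-time accounting all match the paper. But the part you yourself flag as ``the technical heart'' is where the proof actually lives, and you have not supplied it. Two concrete gaps stand out. First, your ``key structural claim'' --- that the leftmost vertex of $X$ is a feasible start whenever any vertex of $X$ is --- is asserted with only the intuition that moving the start leftward ``only enlarges the right tail''; that is not an argument, and the paper never proves such an exchange lemma in isolation. Instead it proves, for each branch that answers ``no'' (steps~6 and~9), that \emph{no} $s'\in X$ admits a valid ordering, using the inequality $\lp{s}\le\lp{s'}$ together with the observation that $z$ must be adjacent to the second-to-last vertex of every shortest \stpath{s'}{u} or every shortest \stpath{s'}{w}; and for each branch that answers ``yes'' (steps~7 and~8) it exhibits an explicit ordering started at $s$ itself. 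The feasibility of the leftmost start is a by-product, not a premise.

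Second, the necessity argument for the final ``no'' (step~9) is the hardest piece and your proposal replaces it with an unstated ``first-reacher chains'' lemma. The paper's argument here is quite specific: it first pins down $\mathrm{dist}(s,u)=\mathrm{dist}(s,w)=q$ (since otherwise step~8 would have fired), then for an arbitrary candidate start $s'$ establishes $q\le\mathrm{dist}(s',u)\le\mathrm{dist}(s',w)\le q+1$, shows that the sets $Y$ and $Z$ of vertices at distance $q-1$ from $u$ and from $w$ are disjoint (using the minimality of $\lp{s}$), and then runs a three-way case analysis on the values of $\mathrm{dist}(s',u)$ and $\mathrm{dist}(s',w)$, in each case forcing $z<_\sigma w$ from the fact that every $v\in Y\cap N(s)$ satisfies $\mathrm{dist}(v,z)=\mathrm{dist}(v,u)$ when step~8 fails. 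Similarly, the step~6 contradiction hinges on deriving $\mathrm{dist}(s',z)=\mathrm{dist}(s',u)=\mathrm{dist}(s',w)$ and finding a non-universal neighbor $u''$ of $u$ that is adjacent to $z$, not on $z$ merely ``sitting on the $u$-side.'' Without these arguments (or a genuinely worked-out substitute for them) the correctness of the two rejecting branches is unestablished, so the proposal as written does not yet prove the lemma.
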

\begin{proof}
  Let $G$ be an interval graph; we may assume without loss of generality that $G$ is connected.  We use the algorithm of Corneil et al.~\cite{corneil-09-lbfs-strucuture-and-interval-recognition} to build a clique path for $G$, and take simplicial vertices $v_1, v_2$ from the first and last cliques of the clique path.  We call the procedure described in Figure~\ref{fig:alg-BFS-interval} twice, first with $u = v_1, w = v_2$; in the second call, we reverse the clique path, and use $u = v_2, w = v_1$.  Suppose that the procedure is correct, then vertex $z$ is a \textsc{bfs} end vertex if and only if at least one of the two calls returns yes.  In the rest we prove the correctness of the procedure and analyze its running time.

  We start from characterizing the first vertex $s$ of a \textsc{bfs} ordering $\sigma$ with $\sigma(z) = n$ and $u <_\sigma w$, if one exists.  Since $u <_\sigma w <_\sigma z$, we must have $\mathrm{dist}(s, u) \le \mathrm{dist}(s, w) \le \mathrm{dist}(s, z)$.  On the other hand, Proposition~\ref{lem:interval-ends} implies $\mathrm{dist}(s, z) \le \max\{\mathrm{dist}(s, u), \mathrm{dist}(s, w)\} = \mathrm{dist}(s, w)$.  Therefore, a desired \textsc{bfs} ordering $\sigma$, if it exists, must start from a vertex $s$ satisfying
  \begin{equation}
    \label{eq:start-vertex}\tag{$\dag$}
    \mathrm{dist}(s, z) = \mathrm{dist}(s, w) \ge \mathrm{dist}(s, u).
  \end{equation}
  We argue that at least one of the following is true for $z$:
  \begin{itemize}
  \item on any shortest \stpath{s}{u}, $z$ is adjacent to the second to last vertex but no vertex before it.
  \item on any shortest \stpath{s}{w}, $z$ is adjacent to the second to last vertex but no vertex before it.
  \end{itemize}
  Let $P_u$ be any \stpath{s}{u} and $P_w$ any \stpath{s}{w}.  Since they together form a \stpath{u}{w} that visits all the maximal cliques of $G$, vertex $z$ is adjacent to at least one of these two paths.  If $z$ is adjacent to a vertex on $P_u$, then it has to be the last two; otherwise $\mathrm{dist}(s, z) < \mathrm{dist}(s, u)$.  Since $u$ is simplicial, $z$ is adjacent to its neighbor on the path if $z u \in E(G)$.  Therefore, $z$ is always adjacent to the second to last vertex on this path.  The same argument applies if $z$ is adjacent to $P_w$.
  
  The correctness of step 1 follows from Proposition~\ref{lem:interval-ends}.  For step~2, note that if $v\ne z$ is a universal vertex, then $\langle v, u, w, \ldots, z \rangle$ is such a \textsc{bfs} ordering.  Steps 3 and 4 are justified by \eqref{eq:start-vertex}.  When the algorithm reaches step~5, $X$ is not empty, and hence $s$ is well defined.
  Let $q = \mathrm{dist}(s, z) = \mathrm{dist}(s, w)$.  Note that $q \ge 2$ because $s$ is not universal.  Hence, $z, w \not \in N(s)$.  

  We show the correctness of step~6 by contradiction.  Suppose that $\rp{z} < \lp{s}$ but there exists a \textsc{bfs} ordering $\sigma$ with $\sigma(z) = n$ and $u <_\sigma w$.  Let $s'$ be the first vertex of $\sigma$.  Since $s' \in X$, the selection of $s$ implies $\lp{s}\le \lp{s'}$.  Then $\rp{u} = 1\le \rp{z} < \lp{s}\le \lp{s'}$, therefore, $\mathrm{dist}(s', u) \ge 2$.   In this case, on any shortest \stpath{s'}{u}, $z$ is adjacent to the second to last vertex but no vertex before it.  Hence, $\mathrm{dist}(s', z) = \mathrm{dist}(s', u) = \mathrm{dist}(s', w)$; let it be $q'$.  Since $u<_\sigma w$, there must be some neighbor $u''$ of $u$ at distance $q' - 1$ to $s'$ visited before neighbors of $w$.  The vertex $u''$ cannot be universal, hence nonadjacent to $w$.   But $u''$ is adjacent to $z$, which implies $z<_\sigma w$, a contradiction.  Therefore, step~6 is correct, which means $\rp{s} <\lp{z}$ because $s$ and $z$ are not adjacent.  
Let $s=w_0,w_1,\ldots,w_{q-1}, w_q = w$ be a shortest \stpath{s}{w}.  Note that $w_{q-1} \in N(z)$.

For step~7, it suffices to give the following \textsc{bfs} ordering, which starts with $s= u$.  Of all vertices at distance $i$ to $s$, $1\le i \le q$, the first visited vertex is $w_i$.  Note that every vertex is adjacent to $w_1, \ldots, w_{q-1}$.  From $\rp{w_{q-1}} = p$ it can be inferred that all vertices at distance $q$ to $s$ are adjacent to $w_{q-1}$.  Since $w_{q-1}$ is the first visited vertex at level $q - 1$, vertices at distance $q$ to $s$ can be visited in any order.  Therefore, we can have a \textsc{bfs} ordering $\sigma$ of $G$ with $u <_\sigma w$ and $\sigma(z) = n$ .

We now consider step 8, for which we show that there exists a \textsc{bfs} ordering $\sigma$ with $\sigma(s) = 1$, $\sigma(v) = 2$, $\sigma(z) = n$, and  $u<_\sigma w$.  Note that $\mathrm{dist}(w_1, z) = \mathrm{dist}(w_1, w) = q - 1$.  Therefore $v\ne w_1$; otherwise step~5 should have chosen $v$ because $\lp{v} < \lp{s}$.  For $1 \le i \le q-1$, vertex $w_i$ is always visited in the earliest possible time; in particular, $\sigma(w_1) = 3$.  Since $v$ is on a shortest \stpath{s}{u}, $u$ is a descendant of $v$ in the \textsc{bfs} tree generated by $\sigma$.  On the other hand, since both $\mathrm{dist}(v, z)$ and $\mathrm{dist}(v,w)$ are larger than $\mathrm{dist}(v,u)$, either vertices $z$ and $w$ are not descendants of $v$, or they are at a lower level than $u$.  In either case, we have $u <_\sigma w$.  When $w_{q}$ is visited, all the unvisited vertices are at distance $q$ to $s$ and adjacent to $w_{q-1}$.  Thus, we can have $\sigma(z) = n$.

We are now at the last step.  Note that the algorithm can reach here only when $\mathrm{dist}(s, z) = \mathrm{dist}(s,w) = \mathrm{dist}(s,u)$: The condition of step~8 must be true if $\mathrm{dist}(s,u) < q$.  Suppose for contradiction that there exists a \textsc{bfs} ordering $\sigma$ with $\sigma(z) = n$ and $u <_\sigma w$ but no vertex satisfies the condition in step~8.   Let $s'$ be the starting vertex of $\sigma$.
Since $s'\in X$ and by the selection of $s$, we have $\lp{s'} \ge \lp{s}$, which implies $\mathrm{dist}(s', u) \ge \mathrm{dist}(s, u)$.  Note that $s'$ is adjacent to any \stpath{s}{w}, and hence its distance to $w$ is at most $q + 1$.   In summary,

$$q= \mathrm{dist}(s, u) \le \mathrm{dist}(s', u) \le \mathrm{dist}(s', w) \le q + 1.$$
Let $Y$ denote all vertices at distance $q - 1$ to $u$, and let $Z$ denote all vertices at distance $q - 1$ to $w$.  Note that $Y$ is disjoint from $Z$: A vertex in $v\in Y\cap Z$ would be adjacent to $s$, and have the same distance to $u, w$, and $z$, but then it contradicts the selection of $s$ because $\lp{v} < \lp{s}$.  Since no vertex in $Y$ satisfies the condition of step~8, $\mathrm{dist}(v, z) = \mathrm{dist}(v, u)$ for all $v\in Y\cap N(s)$.

If $\mathrm{dist}(s', u) = \mathrm{dist}(s', z) = \mathrm{dist}(s', w) = q$, then to have $u<_\sigma w$, one vertex in $Y\cap N(s)$ must be visited before $Z$.  But this would force $z$ to be visited before $w$, because $z$ is at distance $q - 1$ to all vertices in $Y\cap N(s)$.  Now that $\mathrm{dist}(s', w) = q + 1$, if $\mathrm{dist}(s', u) = q$, then at least one vertex $v\in Y$ is adjacent to $s'$; it is in $N(s)$ because $\lp{s} \le \lp{s'}$.  But then $\mathrm{dist}(s', z) \le 1 + \mathrm{dist}(v, z) = 1 + q - 1 = q <\mathrm{dist}(s', w)$.  Therefore, $\mathrm{dist}(s', u) = q+1$ as well.  Each vertex in $Y\cup Z$ has distance at least two to $s'$.  Of vertices at distance two to $s'$, one vertex in $Y\cap N(s)$ must be visited before $Z$, but then we have the same contradiction as in the first case of this paragraph.  Therefore, step~9 is also correct and this concludes the proof of correctness.

We now analyze the running of the algorithm.  Steps~1 and 2 can be easily checked in $O(n+m)$ time.  For step~3, it suffices to calculate the distances between $z, w, u$ and all other vertices; this can be done by visiting the maximal cliques one by one.  Steps 4--7 can be done in $O(n)$ time.   Step~8 can be checked in $O(n)$ time: We have already calculated the distance between $z$ and $v$.  Therefore, the total running time is $O(n + m)$.
\end{proof}

\section{Graph searches on general graphs}
We now describe an algorithm for deciding whether a vertex $z$ of a general graph is an \textsc{mcs} end vertex.  For each subset $X\subseteq V(G)\setminus \{z\}$, we define $f(X)$ to be {true} if there exists an \textsc{mcs} visiting $X$ before others, and {false} otherwise. The question whether $z$ can be an end vertex is then simply the value of $f(V(G)\setminus \{z\})$.  For a set $X$ with $f(X)$ is true and $v\not \in X$, let $g(X, v)$ indicate whether there exists a search ordering that visits $v$ after $X$ and before others.  We have 
\[
  f(X) = \bigvee_{v\in X} \big( f(X\setminus \{v\}) \land g(X\setminus \{v\}, v) \big).
\]
  For \textsc{mcs}, $g(X, v)$ can be calculated in linear time, and thus we have a simple $O(2^n n^{O(1)})$-time algorithm similar to the classic Held--Karp algorithm \cite{held-62-dp}.

  Let us consider then \textsc{bfs}.  We may fix the starting vertex $s$, which can be found by enumerating all the other $n - 1$ vertices.  Let $\ell = \max_{v\in V(G)}\mathrm{dist}(s, v)$, and for $1\le i\le \ell$, let $L_i$ denote the set of vertices at distance $i$ to $s$.  
  Suppose that there is a \textsc{bfs} ordering $\sigma$ started with $s$  and ended with $z$, then $z\in L_\ell$.  Clearly, vertices in $L_{\ell - 1}$ are visited after those in $L_{\ell - 2}$ and before $L_{\ell}$.  Let $u$ be the first visited vertex in $L_{\ell - 1}$ that is adjacent to $z$, and let $X$ be those vertices in $L_{\ell - 1}$ visited before $u$.  Since $z$ is the last vertex, all vertices in $L_\ell\setminus N(X)$ must be adjacent to $u$.  We do not need any constraint on the order of vertices in $L_{\ell - 1}\setminus (X\cup \{u\})$ being visited.
Therefore, the information we need at level $\ell - 1$ are the set $X$ and the vertex $u$.  We can generalize this observation to give a recursive formula for the \textsc{bfs} end vertex problem.

\begin{lemma}
  There is a $2^n\cdot n^{O(1)}$-time algorithm for solving the \textsc{bfs} end vertex problem.
\end{lemma}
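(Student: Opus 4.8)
The plan is to turn the structural observation already made in the text---that at level $\ell-1$ the only information a \textsc{bfs} ordering needs to carry forward is the pair $(X,u)$, where $X$ is the set of level-$(\ell-1)$ vertices visited before the first neighbor $u$ of the target---into a layer-by-layer dynamic program, and then to collapse the apparently $2^n\cdot 2^n$ many states down to $2^n\cdot n^{O(1)}$ by observing that the states are indexed by \emph{subsets of a single level together with one marked vertex}, so that the total number of $(X,u)$ pairs summed over all levels is bounded by $\sum_i 2^{|L_i|}\cdot n \le 2^n\cdot n$.

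First I would fix the start vertex $s$ (an outer loop over the $n-1$ choices) and compute, by a single \textsc{bfs} from $s$, the distance layers $L_1,\dots,L_\ell$; since $z$ must be last it must lie in the deepest layer, so I first check $z\in L_\ell$ and otherwise reject this choice of $s$. The core is a boolean table indexed by a level $i$ and a pair $(X,u)$ with $X\subseteq L_i$ and $u\in L_i\setminus X$; the intended meaning is that there is a valid \textsc{bfs} prefix visiting all of $L_1,\dots,L_{i-1}$ in some legal order and then visiting exactly the vertices of $X$ before $u$ within level $i$, such that this prefix can still be completed to a full \textsc{bfs} ending at $z$. I would then write the recurrence that passes from level $i$ to level $i+1$: a state $(X,u)$ at level $i$ is \emph{feasible} if there is a state at level $i+1$ consistent with it, where consistency encodes exactly the \textsc{bfs} constraint that the visiting order of level $i+1$ is governed by the visiting order of level $i$ (earlier-visited parents force their as-yet-unvisited children to be visited earlier). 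Concretely, the set $X'$ of level-$(i+1)$ vertices that must precede the chosen marked vertex $u'$ is determined by which level-$i$ parents precede $u$, so the transition is a local, polynomial-time check rather than a search over all orderings.

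The base case is the deepest level: a state $(X,u)$ at level $\ell-1$ is accepting precisely when $z\in L_\ell$, every vertex of $L_\ell\setminus N(X)$ is adjacent to $u$, and $u\in N(z)$, which is exactly the condition derived in the paragraph preceding the lemma. I would then argue correctness in the two usual directions: from any genuine \textsc{bfs} ordering ending at $z$ one reads off, level by level, the induced pairs $(X_i,u_i)$ and checks they satisfy the recurrence; conversely, from a chain of feasible states satisfying the recurrence one reconstructs an explicit legal \textsc{bfs} ordering, placing within each level the vertices of $X_i$ first (in any order consistent with their parents), then $u_i$, then the rest, and finally choosing $z$ last in $L_\ell$. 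The running time is the number of states, $\sum_{i} 2^{|L_i|}\cdot |L_i|\le n\cdot 2^{\max_i|L_i|}\le n\cdot 2^{n}$, times the polynomial cost of each transition, giving $2^n\cdot n^{O(1)}$ overall, and the outer loop over $s$ only adds a further factor of $n$.

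The main obstacle I expect is pinning down the transition relation so that it faithfully captures the \textsc{bfs} tie-breaking across two consecutive levels while remaining checkable in polynomial time. The subtlety is that the order in which level-$i$ vertices are visited does not merely determine \emph{which} level-$(i+1)$ vertices may appear first, but imposes a partial order on all of $L_{i+1}$ through the parent relation; I must verify that recording only the single pair $(X,u)$---rather than the full permutation of each level---loses no information relevant to whether $z$ can end up last. The key lemma to make this rigorous is that the completion is monotone: once $u$ is fixed as the first neighbour of $z$ at its level and $X$ as its predecessors, the admissible completions depend on $(X,u)$ alone, which is precisely the reduction of the full history to a succinct state that the text already signals. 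Establishing this invariant, and checking it is preserved by the recurrence, is the heart of the argument.
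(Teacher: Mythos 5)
Your overall plan---fix the start vertex $s$, layer the graph into $L_1,\ldots,L_\ell$, and run a dynamic program whose states are pairs $(X,u)$ with $X\subseteq L_i$ and $u\in L_i\setminus X$, so that the total number of states is $\sum_i 2^{|L_i|}\cdot n^{O(1)}\le 2^n\cdot n^{O(1)}$---is exactly the paper's, and so is your base case at the deepest level. The only structural difference is cosmetic: you propagate feasibility backward from level $\ell$, whereas the paper computes reachability forward from level $1$, defining $f(X_i,u_i)$ to mean that some \textsc{bfs} from $s$ visits $L_i$ in the order $X_i$, then $u_i$, then $L_i\setminus(X_i\cup\{u_i\})$, and returns $f(L_\ell\setminus\{z\},z)$.

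The gap is that the proposal stops exactly where the content of the proof begins: you never write down the transition relation, and you flag it yourself as ``the main obstacle.'' Your one concrete sentence about it---that the set of level-$(i+1)$ vertices preceding $u_{i+1}$ ``is determined by which level-$i$ parents precede $u$''---is not accurate: $X_{i+1}$ is \emph{not} determined by $(X_i,u_i)$, because any additional children of $u_i$ may be enqueued, in an order of our choosing, at the moment $u_i$ is dequeued. The correct transition, with $Y_{i+1}=L_{i+1}\setminus X_{i+1}$, is that $(X_i,u_i)$ is compatible with $(X_{i+1},u_{i+1})$ if and only if (a) $X_i\cap N(Y_{i+1})=\emptyset$, so that no vertex of the tail $Y_{i+1}$ is forced into the queue before $u_i$ is processed, and (b) $(X_{i+1}\cup\{u_{i+1}\})\setminus N(X_i)\subseteq N(u_i)$, so that everything that must precede the tail and is not already a child of $X_i$ can be enqueued when $u_i$ is dequeued. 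Condition (a) is also the answer to the ``partial order on all of $L_{i+1}$'' worry you raise: the vertices of $L_{i+1}$ whose relative order is forced by the permutation of $X_i$ all lie in $N(X_i)\cap L_{i+1}\subseteq X_{i+1}$, where their internal order is irrelevant to the next state, while the rest of $X_{i+1}\cup\{u_{i+1}\}$ consists of children of $u_i$ that can be ordered at will. Without (a) and (b) spelled out and verified in both directions, the ``key lemma'' you defer---that the pair $(X,u)$ alone suffices---is not established, and that lemma is the proof. (A smaller slip: your accepting condition at level $\ell-1$ should also include $X\cap N(z)=\emptyset$, i.e., that $u$ really is the \emph{first} visited neighbour of $z$; if some $x\in X$ were adjacent to $z$, then $z$ would be enqueued at $x$'s turn and any vertex of $L_\ell$ discovered only by $u$ or later would be visited after it.)
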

\begin{proof}
We use the following algorithm for the fixed starting vertex $s\in V(G)\setminus \{z\}$.
For $X_i\subset L_i$ and $u_i\in L_i\setminus X_i$, where $1\le i< \ell$, we define a function $f(X_i, u_i)$ to be true if and only if there exists a \textsc{bfs} that starts with $s$ and visits $L_i$ in the order of $X_i, u_i, L_i\setminus (X_i\cup \{u_i\})$.  Note that it is true for all sets and vertices when $i = 1$.  For $1\le i< \ell$, we use the following formula

\begin{align}
    \label{eq:dp-bfs}\tag{\textsc{bfs}-subproblem}
  f(X_{i + 1}, u_{i + 1}) = \bigvee_{\substack{X_i\subset L_i\\ u_i\in L_i\setminus X_i}} \big[& f(X_{i}, u_{i}) \land (X_i\cap N(Y_{i+1}) = \emptyset) \land
  \\
  \notag
 & (X_{i +1}\cup \{u_{i+1}\}\setminus N(X_i)\subseteq N(u_i)) \big]
\end{align}
to calculate $f(X_{i + 1}, u_{i + 1})$, where $Y_{i + 1} = L_{i + 1}\setminus X_{i + 1}$.
After that, we return $f(L_\ell\setminus\{z\}, z)$.

We now show the correctness of \eqref{eq:dp-bfs}.
  Suppose that there is a \textsc{bfs} ordering $\sigma$ of $G$ that visits vertices in $L_{i + 1}$ in the order of $X_{i + 1}$, $u_{i + 1}$, and $L_{i + 1}\setminus (X_{i + 1}\cup \{u_{i + 1}\})$.  Let $u_i$ be the first vertex in $L_i\cap N(Y_{i+1})$ visited in $\sigma$, and $X_i$ those vertices in $L_{i}$ visited before $u_i$.  By the selection of $u_i$ and $X_i$, the value of $f(X_{i}, u_{i})$ is true and $X_i\cap N(Y_{i+1})$ is empty.  Since $u_i$ is adjacent to some vertex in $Y_{i+1}$, to enforce the order in $L_{i+1}$, vertex $u_i$ has to be adjacent to $u_{i + 1}$ and all vertices in $X_{i+1}\setminus N(X_i)$.  Therefore, the right-hand side is true for $X_i$ and $u_i$.

  The other direction is similar.  Suppose that $\sigma$ is a \textsc{bfs} ordering of $G$ that visits vertices in $L_{i}$ in the order of $X_{i}$, $u_{i}$ before others, and the three conditions are all true.  After finishing the $L_i$, we proceed as follows. In level $i+1$, we visit $N(X_i)\cap L_{i+1}\subseteq X_{i + 1}$, and then all the other vertices in $X_{i + 1}$ and $u_{i + 1}$ in order, which are adjacent to $u_i$.  This verifies that $f(X_{i + 1}, u_{i + 1})$ is true.
\end{proof}

In the last we consider \textsc{dfs}.  Recall that a \textsc{dfs} sets two timestamps for a vertex $v$, first when it is visited, and second when it is \emph{finished}, i.e., when all its neighbors have been examined and the search backtracks to the vertex that discovered $v$ (or terminates when $v$ is the source vertex).  Note that when a vertex is finished,  all its neighbors have been visited, and all but one of them have been finished.  In particular, when the last vertex is visited, no vertex in its neighborhood has been finished.  At any moment, the set of vertices that have been visited but not finished form a path in the depth-first tree.
Suppose that $z$ is the end vertex of a \textsc{dfs} ordering $\sigma$ of $G$.  If $v$ is the earliest visited neighbor of $z$, then all the vertices after $v$ are descendants of $v$ in the depth-first tree.

The following simple property of \textsc{dfs} is stronger than Proposition~\ref{lem:ldfs-component}.  In a \textsc{dfs} ordering $\sigma$, if the set of vertices after $v$, i.e., $\{u: v<_\sigma u\}$, and $v$ induce a connected subgraph, then their visiting order is irrelevant to vertices visited before $v$.
\begin{proposition}
  \label{lem:dfs-component}
  Let $\sigma$ be a \textsc{dfs} ordering of a graph $G$, and let $X$ be the set of last visited $|X|$ vertices in $\sigma$.  The sub-ordering $\sigma|_X$ is a \textsc{dfs} ordering of $G[X]$.  Moreover, if $G[X]$ is connected, then $\sigma$ remains a \textsc{dfs} ordering of $G$ after replacing $\sigma|_X$ with any \textsc{dfs} ordering of $G[X]$ that starts with $\argmin_{v\in X} \sigma(v)$.
\end{proposition}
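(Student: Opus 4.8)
The plan is to argue via the standard four-vertex characterization of \textsc{dfs} orderings (Corneil et al.~\cite{corneil-08-graph-searching}): an ordering $\rho$ of a graph $H$ is a \textsc{dfs} ordering of $H$ if and only if for every triple $a <_{\rho} b <_{\rho} c$ with $ac\in E(H)$ and $ab\notin E(H)$, there is a vertex $d$ with $a <_{\rho} d <_{\rho} b$ and $bd\in E(H)$; the canonical witness $d$ is precisely the parent of $b$ in the depth-first tree. The structural fact I would use repeatedly is that $X$ is a \emph{suffix} of $\sigma$ (an up-set: if $a\in X$ and $a <_{\sigma} d$ then $d\in X$), and dually $P := V(G)\setminus X$ is a \emph{prefix} (a down-set). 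Write $v := \argmin_{u\in X}\sigma(u)$ for the first vertex of $X$.

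For the first assertion I would verify the condition for $\sigma|_X$ directly. Take any $a <_{\sigma} b <_{\sigma} c$ in $X$ with $ac\in E(G)$ and $ab\notin E(G)$; since $X$ is induced, adjacency in $G[X]$ and in $G$ agree. Applying the condition for $\sigma$ in $G$ produces $d$ with $a <_{\sigma} d <_{\sigma} b$ and $bd\in E(G)$. Because $a\in X$ and $a <_{\sigma} d$, the suffix property forces $d\in X$, so $d$ is a legal witness inside $G[X]$. Hence $\sigma|_X$ is a \textsc{dfs} ordering of $G[X]$. Note this step uses nothing about connectivity.

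For the second assertion, let $\tau$ be any \textsc{dfs} ordering of $G[X]$ starting at $v$, and let $\sigma'$ be the concatenation of $\sigma|_P$ (the unchanged prefix) followed by $\tau$. I would check the condition for $\sigma'$ by a case analysis on an increasing triple $a <_{\sigma'} b <_{\sigma'} c$; since $P$ precedes $X$ in $\sigma'$, the only membership patterns are $PPP$, $PPX$, $PXX$, and $XXX$. The $PPP$ case reduces to the condition for $\sigma$: the order on $P$ is unchanged, and the witness supplied by $\sigma$ lies in $P$ because it precedes $b\in P$ and $P$ is a prefix. The $XXX$ case reduces to the condition for $\tau$, as $\sigma'|_X=\tau$. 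In the $PPX$ case the same triple $a <_{\sigma} b <_{\sigma} c$ is valid in $\sigma$ (all relative orders are preserved), and its witness again lands in $P$, which carries over to $\sigma'$.

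The remaining $PXX$ case is where connectivity is essential, and I expect it to be the main obstacle. Here $a\in P$ while $b,c\in X$, and because we replaced $\sigma|_X$ by $\tau$ the $\sigma'$-order of $b,c$ need not match their $\sigma$-order, so the original triple cannot simply be reused. I would split on whether $b=v$. If $b\neq v$, then since $G[X]$ is connected and $\tau$ starts at $v$, the depth-first tree of $\tau$ is spanning, so $b$ has a $\tau$-parent $d\in X$ with $d <_{\tau} b$ and $bd\in E(G)$; as $a\in P$ and $d\in X$ we get $a <_{\sigma'} d <_{\sigma'} b$, a valid witness. If $b=v$, no vertex of $X$ precedes $b$, so the witness must come from $P$: I would invoke the condition for $\sigma$ on the triple $a <_{\sigma} v <_{\sigma} c$ (valid because $c\in X$ with $c\neq v$ gives $v <_{\sigma} c$, while $ac\in E(G)$ and $av\notin E(G)$), obtaining $d$ with $a <_{\sigma} d <_{\sigma} v$; since $v$ is the first vertex of $X$ we have $d\in P$, and the preserved prefix order yields $a <_{\sigma'} d <_{\sigma'} v$ with $dv\in E(G)$. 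This settles all cases and shows $\sigma'$ is a \textsc{dfs} ordering of $G$. An equivalent, more operational route is to simulate \textsc{dfs} on $G$: replay $\sigma$ until $v$ is pushed, then follow $\tau$; the only point to check is that during the $X$-phase the search never consults the $P$-vertices lying below $v$ on the stack, which holds exactly because every $G$-neighbor of an $X$-vertex outside $X$ is already finished and $G[X]$ is connected.
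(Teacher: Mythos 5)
Your argument is correct, but note that the paper itself offers no proof of this proposition at all---it is stated as a ``simple property of \textsc{dfs}'' and left to the reader---so you have supplied something the authors only asserted. Your route, via the four-vertex characterization of \textsc{dfs} orderings from \cite{corneil-08-graph-searching}, is a clean way to do it: the first assertion follows immediately from the fact that $X$ is an up-set of $\sigma$ (so the witness $d$ produced for a triple inside $X$ cannot escape into the prefix), and your case analysis $PPP$, $PPX$, $PXX$, $XXX$ for the second assertion is exhaustive and each case is handled correctly; in particular the split on $b=v$ versus $b\neq v$ in the $PXX$ case, using connectivity of $G[X]$ to guarantee a $\tau$-predecessor of $b$ in the one case and pulling the witness back into $P$ via the original triple $a<_\sigma v<_\sigma c$ in the other, is exactly where the content of the statement lives. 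The only imprecision is in your closing ``operational'' remark: a $P$-neighbor of an $X$-vertex need not be \emph{finished} when $v$ is reached (e.g., the ancestors of $v$ on the stack are visited but unfinished); the correct justification for the simulation is that all of $P$ is already \emph{visited} by then, and a \textsc{dfs} of the connected graph $G[X]$ started at $v$ never backtracks past $v$ before exhausting $X$, so the stack discipline of the outer search is never violated. Since that paragraph is offered only as an alternative to your main argument, this does not affect the validity of the proof.
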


\begin{lemma}
  There is a $2^n\cdot n^{O(1)}$-time algorithm for solving the \textsc{dfs} end vertex problem.
\end{lemma}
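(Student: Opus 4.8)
The plan is to solve the problem by dynamic programming over induced subgraphs, exploiting the fact that, for \textsc{dfs}, the subproblem obtained by restricting attention to the vertices visited after a fixed point is \emph{self-contained}: by Proposition~\ref{lem:dfs-component} it depends only on the induced subgraph they span and on which vertex is entered first. Fix the target vertex $z$. For a vertex subset $X$ with $z\in X$ and $G[X]$ connected, and a vertex $v\in X$, I would define $f(X,v)$ to be true if and only if $z$ is the end vertex of some \textsc{dfs} ordering of the induced subgraph $G[X]$ that starts at $v$. The answer to the whole problem is then $\bigvee_{s\in V(G)\setminus\{z\}} f(V(G),s)$, since $G$ is connected and the start of a \textsc{dfs} on at least two vertices differs from its end.

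For the recursion, recall the elementary structure of a \textsc{dfs} tree rooted at $v$: after $v$ is visited, each connected component of $G[X\setminus\{v\}]$ is explored as one subtree rooted at some neighbor of $v$ in that component, and these components may be explored in any order and entered through any such neighbor. Consequently the end vertex lies in whichever component is explored last. With $z$ fixed, the last-explored component must be the component $P$ of $G[X\setminus\{v\}]$ containing $z$, while the remaining components---each connected and each containing a neighbor of $v$---can always be explored first without any further condition. This yields, for $|X|\ge 2$,
\[
  f(X,v) = \bigvee_{c\in N(v)\cap P} f(P,c),
\]
with base case $f(\{z\},z)=\text{true}$; when $v=z$ and $|X|\ge 2$ no such $P$ exists and $f(X,v)$ is false, as it should be. The forward direction (a witnessing ordering decomposes this way) is immediate from the tree structure, and the backward direction (gluing an ordering of $P$ that ends at $z$ onto arbitrary explorations of the other components) is exactly Proposition~\ref{lem:dfs-component}, which guarantees that the chosen exploration of the last component can be appended without disturbing what precedes it.

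I would then bound the cost. There are at most $n\cdot 2^{n-1}$ states (subsets $X\ni z$ together with a vertex $v\in X$), and each is evaluated once, processed in order of increasing $|X|$ so that every $f(P,c)$ on the right-hand side is already available. Evaluating one state amounts to finding the component $P$ of $G[X\setminus\{v\}]$ containing $z$ in $O(n+m)$ time and taking an \textsc{or} over the at most $n$ neighbors $c\in N(v)\cap P$, so the total running time is $2^n\cdot n^{O(1)}$.

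The main obstacle---and the place where the \textsc{dfs}-specific work really sits---is justifying that $f(X,v)$ genuinely depends only on $G[X]$ and $v$, that is, that committing to visit $X$ last, entered at $v$, decouples the exploration of $X$ from everything visited earlier. This is precisely the content of Proposition~\ref{lem:dfs-component} (which is stronger than the analogous statement for \textsc{ldfs}); without it one would be tempted to track the entire active path of the prefix, which would break the $2^n$ bound. A secondary point to verify carefully is the claim that the children of the root correspond bijectively to the connected components of $G[X\setminus\{v\}]$, so that fixing the component of $z$ as the last one is both necessary and sufficient; this is a routine but essential consequence of the depth-first discipline.
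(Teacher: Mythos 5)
Your proposal is correct, but it decomposes the problem differently from the paper. The paper's table is indexed by triples $(X,s,t)$ --- ``is there a \textsc{dfs} of the connected graph $G[X]$ starting at $s$ and ending at $t$'' --- and its recurrence peels off the suffix $Y$ beginning at the first visited neighbor $v$ of $t$ other than $s$, so each evaluation involves a disjunction over both a vertex $v\in N(t)\cap X$ and a subset $Y\supseteq (N[t]\cap X)\setminus\{s\}$. You instead fix the target $z$ globally, index by $(X,v)$ with $v$ the \emph{start}, and recurse at the root: since the subtrees of the root of a depth-first tree are exactly the connected components of $G[X\setminus\{v\}]$, the component $P$ containing $z$ must be explored last, it is uniquely determined, and the only choice left is the entry vertex $c\in N(v)\cap P$. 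Both arguments lean on the same key fact (Proposition~\ref{lem:dfs-component}, that the suffix of a \textsc{dfs} ordering inducing a connected subgraph is a self-contained \textsc{dfs} of that subgraph and can be replaced freely), and your use of it in both directions of the equivalence is sound. What your route buys is the elimination of the subset guess: your recurrence has at most $n\cdot 2^{n-1}$ states, each resolved by one component computation and a disjunction over at most $n$ entry vertices, so the $2^n\cdot n^{O(1)}$ bound is immediate, whereas the paper's inner disjunction over $Y\subseteq X$ costs $\sum_X 2^{|X|}=3^n$ terms if implemented as written and needs additional care to stay within the claimed bound. The one point you flag as needing verification --- that the root's children correspond bijectively to the components of $G[X\setminus\{v\}]$ --- is indeed the load-bearing step of your version, and it holds by the standard white-path argument for the root of a depth-first tree.
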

\begin{proof}
  For a vertex set $X\subseteq V(G)$ and $s, t\in X$, we define a function $f(X, s, t)$, which is true only when (1) $G[X]$ is connected; and (2) there exists a \textsc{dfs} ordering of $G[X]$ that starts with $s$ and ends with $t$.   We may assume without loss of generality that $G$ is connected.  Then whether $z$ is a \textsc{dfs} end vertex is $\bigvee_{v\in V(G)\setminus \{z\}} f(V(G), v, z)$.  It is clear that $f(X, s, t)$ is true if $G[X]$ is connected and $s$ is the only neighbor of $t$ in $X$.  We use the following formula to calculate it otherwise
\begin{equation}
  \label{eq:dp-dfs}\tag{\textsc{dfs}-subproblem}
  f(X, s, t) = \bigvee_{\substack{v\in (N(t)\cap X)\setminus \{s\}\\(N[t]\cap X)\setminus \{s\} \subseteq Y}} f\big( (X\setminus Y)\cup\{v\}, s, v \big) \land f(Y, v, t).
\end{equation}


  One direction is clear: If there exist vertex $v$ and set $Y$ such that both $f\left( (X\setminus Y)\cup\{v\}, s, v \right)$ and $f(Y, v, t)$ are true, then we can visit $X\setminus Y$, followed by $v$, and then other vertices in $Y$.  In the rest we focus on the other direction,---i.e., if $f(X, s, t)$ is true, then there must be one $v$ making the right hand true.

Let $v$ be the first visited vertex in $N(t)\cap X\setminus \{s\}$, and let $Y$ comprise $v$ and all the vertices visited after $v$.  The sub-orderings of $\sigma$ restricted to $(X \setminus Y)\cup \{v\}$ and to $Y$ are \textsc{dfs} orderings for $X - (Y\setminus \{v\})$ and $G[Y]$ respectively.  The subgraph $G[X] - (Y\setminus \{v\})$ is connected because all the vertices are descendants of $s$ in the depth-first tree; $G[Y]$ is connected because $t$ cannot be last visited otherwise. 
  By the selection of $v$ and $Y$, $f\left( (X\setminus Y)\cup\{v\}, s, v \right)$ is true.  By Proposition~\ref{lem:dfs-component}, the ordering of visiting $Y\setminus \{v\}$ after $v$ is a \textsc{dfs} ordering of $G[Y]$.  This concludes the proof.
\end{proof}

\paragraph{Acknowledgment.} Y.C.~would like to thank Jing Huang for bringing the end vertex problems to his attention.

\bibliographystyle{plainurl}

\end{document}